\documentclass[submission,copyright,creativecommons]{eptcs}
\providecommand{\event}{AiML 2026} 

\usepackage{aiml26}

\usepackage{iftex}

\ifpdf
  \usepackage{underscore}         
  \usepackage[T1]{fontenc}        
\else
  \usepackage{breakurl}           
\fi

\usepackage{graphicx}%
\usepackage{multirow}%
\usepackage{mathrsfs}%
\usepackage[title]{appendix}%
\usepackage{xcolor}%
\usepackage{textcomp}%
\usepackage{manyfoot}%
\usepackage{booktabs}%
\usepackage{marvosym }

\usepackage{url}
\usepackage{amsmath}
\usepackage{wasysym}
\usepackage{amssymb}
\usepackage{graphicx}
\usepackage{bussproofs}
\usepackage{forest}
\usepackage{caption}
\usepackage[all]{xy}
\usepackage{tikz}
\usetikzlibrary{automata, trees, positioning, arrows}
\usepackage{xparse}
\usepackage{tabularx}
\usepackage{relsize}
\usepackage{adjustbox}
\usepackage{afterpage}
\usepackage{pgfplots}\pgfplotsset{compat=1.9}
\usepackage{subcaption}

\newcommand{\limp}{\rightarrow}

\newcommand{\forces}{\Vdash}

\newcommand{\bigO}{\mathcal{O}}

\newcommand{\cegarboxpp}{\texttt{CEGARBox++}}
\newcommand{\cegartab}{{{\sc CegarTab}}}
\newcommand{\recartab}{{\sc RecarTab}}
\newcommand{\recarbox}{{\tt RECARBox}}
\newcommand{\cegarbox}{\texttt{CEGARBox}}

\newcommand{\minisat}{\texttt{MiniSAT}}

\newcommand{\mosaic}{\texttt{MOSAIC}}

\newcommand{\myk}{K}
\newcommand{\kn}{\ensuremath{\mathrm{K_n}}}

\newcommand{\alc}{ALC}

\usepackage{mathtools}
\usepackage{xspace}
\newcommand{\KSP}{\texttt{K}\kern -.1em \raisebox{-.24em}{\texttt{S}}\kern -.05em \texttt{P}\xspace}
\newcommand{\ksp}{\KSP{}}

\newcommand{\snfmc}{\ensuremath{\textrm{SNF}_{mc}}}
\newcommand{\snfml}{\ensuremath{\textrm{SNF}_{ml}}}

\newcommand{\mcnf}{\snfmc{}}

\newcommand{\wbrex}[1]{#1}
\newcommand{\ccala}[1]{\mcal{C}_{\wbrex{#1}}}

\newcommand{\rosen}{Ros\'en}

\newcommand{\atm}{Atm}
\newcommand{\ag}{Ag}

\newcommand{\bx}[1]{[\, #1 \,]}
\newcommand{\di}[1]{\langle #1 \rangle}

\usepackage{float}

\usepackage{url}
\usepackage{amssymb}
\usepackage{graphicx}
\usepackage{bussproofs}
\usepackage[all]{xy}
\usepackage{tikz}
\usepackage{pgfplots}\pgfplotsset{compat=1.9}
\usepackage{algorithm}
\usepackage{algorithmic}
\newcommand{\wbx}[1]{\bx{#1}}
\newcommand{\wdi}[1]{\di{#1}}

\newcommand{\wdia}{\wdi{\alpha}}
\newcommand{\wbxa}{\wbx{\alpha}}

\newcommand{\wbxb}{\wbx{\beta}}

\newcommand{\vetof}[1]{\mathcal{C}^{cpl}(#1)}

\newcommand{\code}[1]{\texttt{#1}}
\newcommand{\ucore}[1]{UC_{#1}}
\newcommand{\eps}{\epsilon}

\newcommand{\ass}[1]{\mathcal{A}_{#1}}

\newcommand{\cmodel}[1]{\vartheta_{#1}}

\newcommand{\ccpl}[1]{\mathcal{C}^{cpl}_{#1}}
\newcommand{\cdia}[1]{\mathcal{C}^{dia}_{#1}}
\newcommand{\cbox}[1]{\mathcal{C}^{box}_{#1}}
\newcommand{\mcal}[1]{\mathcal{#1}}
\newcommand{\ccal}{\mcal{C}}

\newcommand{\semic}{\ ;\ }

\newcommand{\scsat}{\mbox{\Lightning}SAT}
\newcommand{\scunsat}{\mbox{\Lightning}UNSAT}

\newcommand{\nclap}[1]{\makebox[0pt]{\hss#1\hss}}
\newcommand{\adi}{
  \sbox0{$\lozenge$}
  \usebox0\kern-.5\wd0\nclap{\raisebox{.1ex}{\scalebox{.1}[1]{$-$}}}\kern.5\wd0%
}

\title{Modal CEGAR-tableaux with RECAR and resolution-based SAT-shortcuts}
\author{Rajeev Gor\'e
  \institute{Faculty of Information Technology, Monash University, Australia }
\email{rajeev.gore@monash.edu}
\and
Cormac Kikkert
\institute{Cormac Kikkert Research}
}

\newcommand{\titlerunning}{Modal CEGAR-tableaux with RECAR and resolution-based SAT-shortcuts}
\newcommand{\authorrunning}{Rajeev Gor\'e \& Cormac Kikkert}

\begin{document}



\maketitle
\begin{abstract}
  We investigate two approaches for extending CEGAR-tableaux with SAT-shortcuts using a previously known approach called RECAR
  but also a totally new approach using the modal resolution theorem prover \ksp{} as an oracle.
  Our experiments using our C++ implementation \cegarboxpp{} of CEGAR-tableaux show that:
  (1) \cegarboxpp{} with RECAR SAT-shortcuts is not competitive
  (2) \cegarboxpp{} using \ksp{} to provide SAT-shortcuts is superior to both \cegarboxpp{} and \ksp{},
  particularly on large satisfiable problems.
  As far as we know, this is the first effective integration of SAT, tableaux and resolution methods for modal satisfiability
  which performs better than its parts.



\end{abstract}


\section{Introduction}
\label{sec:intro}

Propositional modal, temporal and description logics are of fundamental importance in logic-based artificial intelligence,
hardware and software verification~\cite{DBLP:conf/icla/Vardi09} and knowledge representation and
reasoning~\cite{baader2008description}.
Thus efficient decision procedures for modal
logic  are an important area of research.

Such research
has not matched the
advances in SAT-solving, until recently, when various authors independently used SAT-solvers
for modal
satisfiability~\cite{pagram-hons,DBLP:conf/gandalf/GeattiGM21,DBLP:conf/hvc/LiZPV15,DBLP:journals/fmsd/LiZPZV19,DBLP:conf/tableaux/GoreK21}
using a technique
originally 
due to
Claessen and \rosen{}~\cite{DBLP:conf/lpar/ClaessenR15} for propositional intuitionistic logic, which 
was, itself, ``inspired'' by the work of Gor\'e et al. on ``modal clause learning'' using binary decision
diagrams~\cite{DBLP:conf/cade/GoreT13}.

We previously~\cite{DBLP:conf/tableaux/GoreK21}
presented a new type of tableaux calculus which uses a SAT-solver and the standard counter-example guided abstraction refinement
(CEGAR) methodology~\cite{DBLP:journals/jacm/ClarkeGJLV03} to search for a rooted Kripke model for a given formula in modal clausal normal form
(mcnf).
We
showed how to modify the initial mcnf to ``compile in'' the axioms T and 4 via a preprocessing stage,
adding a standard loop-check for termination when required. Our
Haskell implementation \cegarbox{} was, overall, the best over the standard benchmarks for the modal logics K, KT
and S4, sometimes by orders of magnitude.
Indeed, the only benchmark where \cegarbox{} did not win were the K-MQBF benchmarks,
where the modal resolution theorem prover \ksp{} solved approximately 50 (out of 1000) more problems in
32 seconds each~\cite{DBLP:conf/tableaux/GoreK21}: see Figure~\ref{fig:mqbf-before}.

When using SAT-solvers as oracles, we usually first create a formula of classical propositional logic (cpl) which is an
approximation of the given (modal) formula $\varphi_0$.  The approximation is an under-approximation $\check{\varphi_0}$ if its
cpl-{UNSAT}isfiability implies the modal unsatisfiability of $\varphi_0$, giving us an \scunsat-shortcut.  The approximation is an
over-approximation $\hat{\varphi_0}$ if its cpl-{SAT}isfiability implies the modal satisfiability of $\varphi_0$, giving us a
\scsat-shortcut.


Here we report on two ways to add \scsat-shortcuts to CEGAR-tableaux: the RECAR approach and 
a totally new approach using the resolution-based solver \ksp{} as an oracle.
Using our new C++ implementation \cegarboxpp{}, our experiments show that:
  (1) \cegarboxpp{} with RECAR \scsat-shortcuts is not competitive; and
  (2) \cegarboxpp{} using \ksp{} to provide \scsat-shortcuts is superior to each, particularly on large satisfiable problems.

\section{Preliminaries: syntax, semantics and modal clausal normal form}

\begin{figure}[t]
\centering
\begin{tabular}[c]{l@{\extracolsep{1em}}cl}
  $M, w \forces \wdia \varphi$ & iff & $\exists v, R_\alpha(w, v)~ \&~ M, v \forces \varphi$
  \\
  $M, w \forces \wbxa\varphi$ & iff & $\forall v, R_\alpha(w, v) \Rightarrow M, v \forces \varphi$
  \\
\end{tabular}
\caption{Semantics of multi-modal logic \kn{} (aka \alc)}
\label{fig:correspondents}
\end{figure}

Let $\ag$ be a non-empty finite set of agent names.
Let $\atm$ be a set of atomic formulae with
$\atm \cap \ag = \emptyset$.
Consider the language
of formulae defined from atoms $p \in \atm$  and
$\alpha \in \ag$ by the BNF grammar:
\begin{center}
  \begin{math}
\varphi \; ::=
\; \bot \; | \; \top \; | \; p \; | \; \neg \varphi
\; | \; \varphi \wedge \varphi \; | \; \varphi \vee \varphi
\; | \; \wbx{\alpha} \varphi \; | \; \wdi{\alpha} \varphi
  \end{math}
\end{center}
Define $(\varphi_1 \rightarrow \varphi_2) := (\neg \varphi_1 \vee \varphi_2)$ and
$(\varphi_1 \leftrightarrow \varphi_2) := ((\varphi_1 \rightarrow \varphi_2) \wedge (\varphi_2 \rightarrow \varphi_1))$.  Let
$\wbxa^0\varphi := \varphi $ and $\wbxa^{n+1}\varphi := \wbxa\wbxa^{n}\varphi$.
The modal-depth of a formula is the maximum number of nested modalities in it.
For a finite set $\Gamma = \{\varphi_1, \cdots , \varphi_k\}$ of formulae, let
$\widehat{\Gamma} = (\varphi_1 \land\cdots\land \varphi_k)$  and let 
$\wbxa\Gamma = \{\wbxa\varphi_1, \cdots , \wbxa\varphi_k\}$.

The Kripke semantics for these multimodal (description) logics use Kripke models consisting of triples
$M := \langle W,$ $\{R_\alpha\}_{\alpha\in\ag},$ $\vartheta\rangle$ over some non-empty set $W$ (of possible worlds), and binary
relations $R_\alpha$ over $W$ for every agent $\alpha\in\ag$, and a valuation $\vartheta(w,p) \subseteq W\times\atm$ telling us
the truth value of each atomic formula $p$ at each world $w \in W$.
Truthhood at world $w$ in model $M$, written
as $M, w \Vdash \varphi$, extends the usual truth-tables for classical propositional logic (cpl): see
Figure~\ref{fig:correspondents}.

A formula $\varphi$ is $\kn$-satisfiable if there is a Kripke model $M$ containing a world
$w$ such that $M, w \forces \varphi$.  A formula $\varphi$ is $\kn$-valid if $\lnot\varphi$ is not
$\kn$-satisfiable.
Formulae $\varphi$ and $\psi$ are logically equivalent if $\varphi \leftrightarrow \psi$ is
\kn-valid, and are
equi-satisfiable if $\varphi$ is
\kn-satisfiable iff $\psi{}$ is \kn-satisfiable.

\subsection{Separated modal clausal normal form for \texorpdfstring{$\kn$}{}}

A \textit{positive literal} is an atomic formula $p$.  A \textit{negative literal} is a negated atomic formula $\lnot p$.  A
\textit{literal} is a positive literal $p$ or else a negative literal $\lnot p$.  We use $a$ to $f$ and $l$ and $r$ for literals.
We use $A$, $B$, $C$ and $D$ for a set of literals.  Let $\bar{l} := \lnot p$ if $l = p$ and $\bar{l} := p$ if $l = \lnot p$ so
that $\bar{\bar{l}} = l$.  A formula is in \textit{negation normal form} (NNF) if it is implication-free and negations appear only in front
of atomic formulae.
An NNF can be created with a linear descent of the formula.

\begin{proposition}
  A formula $\varphi_0$ of modal depth $\kappa$ can be converted into a logically equivalent
  formula $nnf(\varphi_0)$ in NNF which is at most polynomially longer.
\end{proposition}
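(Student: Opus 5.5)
The plan is to define $nnf$ by structural recursion that pushes negations inward, and to prove two things by induction on formula structure: that $nnf(\varphi_0)$ is logically equivalent to $\varphi_0$, and that it is at most linearly (hence polynomially) longer. The grammar has no primitive $\rightarrow$ or $\leftrightarrow$; these are abbreviations. So the length of $\varphi_0$ is measured after expanding them, and we may take $\varphi_0$ to be built from $\bot,\top,p,\neg,\wedge,\vee,\wbxa,\wdia$ only.

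The translation is given by two mutually recursive functions, $nnf(\cdot)$ and $nnf^{\neg}(\cdot)$, where $nnf^{\neg}(\psi)$ is intended to be an NNF equivalent of $\neg\psi$.
\begin{itemize}
\item The function $nnf$ is the identity on $\bot,\top,p$. It commutes with $\wedge,\vee,\wbxa,\wdia$, and sets $nnf(\neg\psi) := nnf^{\neg}(\psi)$.
\item The function $nnf^{\neg}$ sets $nnf^{\neg}(\bot):=\top$, $nnf^{\neg}(\top):=\bot$, $nnf^{\neg}(p):=\neg p$ and $nnf^{\neg}(\neg\psi):=nnf(\psi)$.
\item It dualises the connectives: $nnf^{\neg}(\psi_1\wedge\psi_2):=nnf^{\neg}(\psi_1)\vee nnf^{\neg}(\psi_2)$, and dually for $\vee$.
\item It dualises the modalities: $nnf^{\neg}(\wbxa\psi):=\wdia\, nnf^{\neg}(\psi)$ and $nnf^{\neg}(\wdia\psi):=\wbxa\, nnf^{\neg}(\psi)$.
\end{itemize}
Each clause strictly decreases the size of the argument, so the recursion terminates after a single top-down pass. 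This is the ``linear descent'' mentioned above.

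Correctness is a simultaneous induction proving that, for every Kripke model $M$ and world $w$, $M,w\forces nnf(\psi)$ iff $M,w\forces\psi$, and $M,w\forces nnf^{\neg}(\psi)$ iff $M,w\not\forces\psi$. The propositional cases are De Morgan and double negation. The only genuinely modal cases are the dualities $\neg\wbxa\psi\equiv\wdia\neg\psi$ and $\neg\wdia\psi\equiv\wbxa\neg\psi$. These follow immediately from the clauses in Figure~\ref{fig:correspondents}: ``not for all $R_\alpha$-successors'' is ``for some $R_\alpha$-successor, not''. The induction hypothesis is then applied at the successor world $v$, which is why the statement must quantify over all worlds rather than just the root. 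A simple induction also shows the output has negations only on atoms, so it is in NNF.

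For size, a simultaneous induction shows that $|nnf(\psi)|\le|\psi|$ and $|nnf^{\neg}(\psi)|\le|\psi|+1$, counting symbols. Every clause either preserves the connective or swaps it for its dual, and a negation is only ever introduced directly in front of an atom. In fact, counting the literal $\neg p$ as one node, the output has the same number of nodes as the input minus the eliminated negations. Modal depth is also preserved, since each modality is mapped to exactly one modality; the mention of $\kappa$ in the statement plays no role beyond this.

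The main point of care is the treatment of $\leftrightarrow$. If $\leftrightarrow$ were taken as a primitive connective, unfolding it duplicates both subformulae, and nested equivalences would blow up exponentially. I would therefore state explicitly that length is measured on the $\rightarrow$/$\leftrightarrow$-free form, exactly as the paper's definitions treat them, so that the bound is linear and a fortiori polynomial. If one instead insists on primitive $\leftrightarrow$, I would expect only an equisatisfiable polynomial translation (via fresh names), not a logically equivalent one.
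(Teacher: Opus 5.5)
Your recursion is exactly the ``linear descent'' the paper appeals to; the paper gives no argument beyond that remark. Your simultaneous correctness induction, quantified over all worlds so it can be applied at $R_\alpha$-successors, is correct. The step that fails is the size bound. With the symbol count you specify, both $|nnf(\psi)|\le|\psi|$ and $|nnf^{\neg}(\psi)|\le|\psi|+1$ are false. For $\psi=\neg(p\wedge q)$ we have $|\psi|=4$, but $nnf(\psi)=\neg p\vee\neg q$ has $5$ symbols. Likewise $|nnf^{\neg}(p\wedge q)|=5>|p\wedge q|+1$. The induction breaks in the binary case of $nnf^{\neg}$: the hypothesis contributes a $+1$ for each argument, so you only get $|\psi_1\wedge\psi_2|+2$. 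This slack genuinely accumulates, because $nnf^{\neg}$ can put a new negation in front of every atom occurrence. For instance, $\neg(p_1\wedge\cdots\wedge p_n)$ has size $2n$ but becomes $\neg p_1\vee\cdots\vee\neg p_n$ of size $3n-1$.

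The repair is easy and leaves your conclusion intact; either of two routes works.
\begin{itemize}
\item Prove, by the same simultaneous induction, that $|nnf(\psi)|\le|\psi|+\mathit{at}(\psi)$ and $|nnf^{\neg}(\psi)|\le|\psi|+\mathit{at}(\psi)$, where $\mathit{at}(\psi)$ is the number of atom occurrences in $\psi$. Every case now checks, including $nnf^{\neg}(p)=\neg p$, and this yields $|nnf(\varphi_0)|\le 2|\varphi_0|$.
\item Use the node count you state next, with a literal as a single node. Under that count your two inequalities do hold, even without the $+1$. Converting back to symbols costs at most one extra symbol per atom occurrence.
\end{itemize}
In both cases the translation is linear, hence polynomial, as the proposition requires.
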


A modal clause is any formula of one of the following forms~\cite{tseitin1983complexity,mints1988gentzen,DBLP:journals/fuin/GoreN09}:
\begin{description}
\item[\rm cpl-clause:] a formula $(l_1 \land \cdots \land l_i) \limp (r_1 \lor \cdots \lor r_j)$ of literals;
\item[\rm box-clause:] a formula $a \to \wbxa b$ with literals $a$ and $b$ and agent name $\alpha\in\ag$;
\item[\rm dia-clause:] a formula $c \to \wdia d$ with literals $c$ and $d$ and agent name $\alpha\in\ag$.
\end{description}

Using semicolon to indicate a set-union, an arbitrary set $\ccal = (\ccpl{} \semic \cbox{} \semic$ $\cdia{})$, of
modal-clauses can be partitioned into the cpl-clauses $\ccpl{}$ and box-clauses $\cbox{}$ and dia-clauses $\cdia{}$ as defined
above.

Our modal clausal normal form requires notation to succinctly express finite sequences of modalities
such as
$\wbx{\alpha_1}\wbx{\alpha_2}\cdots\wbx{\alpha_k}$ which naturally correspond to all finite paths
$w_1 R_{\alpha_1} w_{2} R_{\alpha_2} \cdots R_{\alpha_k} w_k$ in a Kripke model.
We therefore abuse notation to extend the language of formulae using constructs from regular expressions
even though they are not part of the official syntax.

We use the composition operator ``;'' from regular expressions
with $\eps$ for the empty regular expression obeying $(\eps ; \alpha) = (\alpha ; \eps) = \alpha$.
We let $\alpha^0 = \eps$ and $\alpha^{k>0} = (\alpha ; \alpha^{k-1})$.
We write
$\wbx{\alpha_1 ; \alpha_2 ;$ $\cdots ; \alpha_k}$ instead of
$\wbx{\alpha_1}\wbx{\alpha_2}\cdots\wbx{\alpha_k}$, 
write $\wbx{\alpha^n ; \beta^m}$ for
$\wbxa^n\wbxb^m$ and
write
$w_1 R_{\alpha_1 ; \cdots ; \alpha_k} w_k$ 
for
$w_1 R_{\alpha_1} w_{2} R_{\alpha_2}$ $\cdots R_{\alpha_k} w_k$.
Let 
$\ag^*$
be the set of all finite regular expressions over $\ag$
that use only ``$;$''
 and let $\ag^k$
 be the set of all finite regular expressions of length $k$ over $\ag$
 that use only ``$;$''.

\begin{proposition}
  \label{prop:mcnf}
  A formula $\varphi_0$ of modal depth $\kappa$ can be put into an equi-satisfiable modal clausal normal form
  $r \land \widehat{\mcnf(\varphi_0)}$
  by naming subformulae using new atomic
  propositions, using $r$ as the name for $\varphi_0$,
  and   where each $\ccal_{}$ below is a finite set of modal clauses
  (essentially the $SNF_{ml}$
  of Nalon et al.~\cite{DBLP:conf/ijcai/NalonHD17}
  who prove this proposition  in detail):
  \begin{center}
    \begin{math}
        \begin{array}{lcl}
          \mcnf(\varphi_0) & :=
          & \ccal_{\eps} \semic \bigcup_{\sigma\in\ag^1}^{} \wbx{\sigma}\ccal_{\sigma}
          \semic\cdots\semic
                    \bigcup_{\sigma\in\ag^\kappa}^{} \wbx{\sigma} \ccal_{\sigma}
        \end{array}
      \end{math}
    \end{center}
  \end{proposition}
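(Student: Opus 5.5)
I will prove Proposition~\ref{prop:mcnf} by a structural renaming procedure in the style of Tseitin, working on $nnf(\varphi_0)$, which is logically equivalent to $\varphi_0$ by the first proposition above. Every name is tagged with the modal context $\sigma \in \ag^*$ in which its subformula occurs. I first assign a fresh atom $t_\psi$ to every non-literal subformula occurrence $\psi$ of $nnf(\varphi_0)$, taking $r := t_{nnf(\varphi_0)}$. The context $\sigma$ of an occurrence is the sequence of agent labels of the modalities on the path from the root to it, so $|\sigma| \le \kappa$.

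For each occurrence $\psi$ in context $\sigma$ I then emit a defining clause, placed in $\ccal_\sigma$, which is implicitly guarded by $\wbx{\sigma}$. Only one direction is needed, $t_\psi \limp \psi'$, because the formula is in NNF and so every subformula occurs positively. The clauses are as follows. For $\psi = \psi_1 \land \psi_2$ I emit the two cpl-clauses $t_\psi \limp x_1$ and $t_\psi \limp x_2$, where $x_i$ is the literal $\psi_i$ itself or else its name $t_{\psi_i}$. For $\psi = \psi_1 \lor \psi_2$ I emit $t_\psi \limp (x_1 \lor x_2)$. For $\psi = \wbxa \psi_1$ I emit the box-clause $t_\psi \limp \wbxa x_1$, and for $\psi = \wdia\psi_1$ the dia-clause $t_\psi \limp \wdia x_1$. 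The subformula $\psi_1$ of a modal formula is in context $\sigma;\alpha$, so its own defining clauses go into $\ccal_{\sigma;\alpha}$. The constants $\top$ and $\bot$ are handled by the clauses $\top\limp t$ and $t \limp \bot$, with the empty disjunction read as $\bot$. The number of clauses is linear in $|nnf(\varphi_0)|$, and the maximal context length is $\kappa$, which gives exactly the displayed layered shape.

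Equi-satisfiability has two directions.

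\emph{Soundness.} Suppose $M,w \forces r \land \widehat{\mcnf(\varphi_0)}$. I show by induction on subformulae, innermost first, that for every world $v$ with $w R_\sigma v$ we have $M,v\forces t_\psi \limp \psi$. The clause for $t_\psi$ holds at every such $v$ because it is guarded by $\wbx{\sigma}$. For a modal case, say $\wdia\psi_1$, the successor $u$ satisfies $w R_{\sigma;\alpha} u$, so the induction hypothesis for context $\sigma;\alpha$ applies to $t_{\psi_1}$ at $u$. Applying the result at $w$ with $\sigma = \eps$ gives $M,w\forces nnf(\varphi_0)$, and hence $M,w \forces \varphi_0$.

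\emph{Completeness.} Given $M,w\forces\varphi_0$, I extend the valuation by setting $t_\psi$ true at $v$ exactly when $M,v \forces \psi$. The new atoms are fresh, so this does not change the truth of the original subformulae. Every clause $t_\psi \limp \psi'$ then holds at every world, so in particular $\wbx{\sigma}\ccal_\sigma$ holds at $w$, and $r$ holds at $w$.

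The main obstacle is the soundness induction. It has to be indexed by the pair (context, subformula) and not by subformula alone, because the same atom may be reused, or the same subformula may occur at different depths. Tagging the names by occurrence, and hence by context, removes this problem. The reason is that the guard $\wbx{\sigma}$ enforces a clause only at the worlds reachable by $\sigma$, and those are exactly the worlds the induction needs. For the finer details of normalisation, such as splitting clauses with several literals on the left, I would appeal to Nalon et al.~\cite{DBLP:conf/ijcai/NalonHD17}.
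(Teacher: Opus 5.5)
Your proof is correct. It is the standard polarity-based renaming argument that the paper does not reproduce but delegates to Nalon et al.: soundness goes by induction on subformula occurrences within their contexts, and completeness interprets each fresh atom as the truth of the subformula it names. The one real difference is that you argue directly for the path guards $\wbx{\sigma}$ rather than for Nalon et al.'s modal-level labels, which is what the displayed branching form actually needs. One small fix: if $nnf(\varphi_0)$ is itself a literal $l$, your rule gives the root no name, so name it anyway and add $r \limp l$.
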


  \begin{example}
    \label{example-one}
  Consider the negation $\varphi_0 := \lnot(\wbxa(p \limp q) \limp (\wbxa p \limp \wbxa q))$ of the K axiom with modal depth $\kappa =1$.
  Its NNF is
  $\wbxa(\lnot p \lor q) \land \wbxa p \land \wdia\lnot q$.
  Putting $r$ as the name for $\varphi_0$ gives us:
  $\ccala{\eps} := r \limp \wbxa b ~;~ r \limp \wbxa p ~;~ r \limp \wdia\lnot q$ 
  and
  $\ccala{\alpha} := b \limp \lnot p \lor q$ with
  $\mcnf(\varphi_0) = \ccal_\eps{} ~;~ \wbxa \ccal_{\alpha}$.
  \end{example}

Note, the set $\ccal_{\eps}$ excludes $r$ so we must add it explicitly to form 
$(r \semic \mcnf(\varphi_0))$.


\section{SAT-solvers with internal state as CPL-oracles}

CEGAR-tableaux~\cite{DBLP:conf/tableaux/GoreK21}
assume we have access to a SAT-solver 
\code{s}, with internal state,
to which we can add 
a cpl-clause $\varphi$ via
\code{addClause(s, $\varphi$)},
and which accepts a set $\ass{}$ of literals, called unit assumptions,
such as \minisat{}~\cite{een2006minisat}.
Intuitively, we want all literals in
$\ass{}$ to be assigned to true.

When pre-loaded with a
set  $\ccpl{}$ of cpl-clauses 
and 
called with \code{solve(s, $\ass{}$)}, such a SAT-solver returns one of two results:
\begin{description}
\item[\rm $(sat, \cmodel{})$:] if
  $(\ass{}\semic\ccpl{})$ is true under 
  some cpl-valuation 
  $\cmodel{} \supseteq \ass{}$ 
  or-else
\item[\rm $(unsat, \ucore{})$:] if   $\ucore{} \subseteq \ass{}$ is a, not necessarily unique,
  minimal ``unsatisfiable core'' of $\ass{}$
  such that
  $(\ucore{} \semic \ccpl{})$
  is  cpl-unsatisfiable, and hence so is $(\ass{}\semic\ccpl{})$. 
\end{description}
$UC$ itself may be cpl-satisfiable but the term ``unsatisfiable core'' is standard.

Since SAT-solvers handle classical propositional logic, they should return
a cpl-valuation
$\vartheta \subseteq \atm$.
That is, strictly speaking, a cpl-valuation $\vartheta$ is just a set of atomic formulae.
But in concrete applications, it may be more useful for a user to assert
``I know that atomic formula $c$ is false'',
which can be achieved with $\lnot c \in \ass{}$.
Thus, the valuations returned by a SAT-solver are actually a subset of
the set of all literals that appear in 
$(\ass{}\semic\ccpl{})$.

We can then easily extend such a set of literals to a valuation over all atoms
 that appear in an mcnf by putting all missing atoms to false.


\begin{example}
  From our previous example,
  $\varphi_0$ is equi-satisfiable with 
    $(r ~;~ \ccal_\eps{} ~;~ \wbxa \ccal_{\alpha})$.
  Create a SAT-solver
  \code{s0} for modal depth $0$,
  pre-load \code{s0} with $\ccpl{\eps} = \emptyset$, the cpl-part of $\ccala{\eps}$,
  put $\ass{\eps} = \{r\}$ because $r$ must be true at modal depth 0, and
  call
  \code{solve(s0, $\ass{\eps}$)}: it must return
  $(sat, \vartheta_\eps)$ where
  $\vartheta_\eps = \{r\} \supseteq \ass{\eps}$
  since
  (\code{$\ass{\eps}$} ; $\ccpl{\eps}) =  (r \semic \emptyset) = \{r\}$ is true
  under $\vartheta_\eps = \{r \}$.

  Create a SAT-solver
  \code{s1} for modal depth 1,
  pre-load
  \code{s1} with $\ccpl{\alpha} = \{b \limp \lnot p \lor q\}$,
  the cpl-part of $\ccala{\alpha}$,
  put
  $\ass{\alpha} = \{b, p, \lnot q\}$,
  and call
  \code{solve(s1, $\ass{\alpha}$)}: it must return
  $(unsat, \ucore\alpha)$ where
  $\ucore{\alpha} = \{b, p, \lnot q \} \subseteq \ass{\alpha}$
  since 
  (\code{$\ucore\alpha$} ; $\ccpl{\sigma}) =  \{b, p, \lnot q, b \limp \lnot p \lor q\}$
  is cpl-unsatisfiable
  but every proper subset is cpl-satisfiable.

  Finally, 
  add $\lnot r$ to \code{s0} via
  \code{addClause(s0, $\lnot r$)},
  and restart it via
  \code{solve(s0, $\ass{\eps}$)}: it must return 
  $(unsat, \ucore\eps)$ where
  $\ucore\eps = \{r\} \subseteq \ass\eps$
  because 
  $(\ucore{\eps} \semic$ $ (\ccpl{\eps} \semic \lnot r))$ $ = (r \semic \lnot r)$
  is cpl-unsatisfiable but no proper subset is so.
\end{example}

\section{CEGAR-tableaux for multi-modal logic \texorpdfstring{\kn}{} (aka \texorpdfstring{\alc}{})}

\begin{figure}[t]
\begin{algorithm}[H]
    \caption{\cegartab($A$, \code{Trie[$\sigma$]})}
\begin{algorithmic}[1]
  \STATE \COMMENT{Inputs: $A$ is a set of unit assumptions, \\ ~~~~~~\code{Trie[$\sigma$]} is a node in our Trie containing modal clauses and a
    SAT-solver.}
    
	\STATE Let $t_\sigma$  := solve(\code{Trie[$\sigma$].sat}, $A$) \COMMENT{\textit{is $\ccpl{\sigma}$ cpl-satisfiable}}
        \IF{$t_\sigma$ = (unsat, $UC_\sigma$)} 
          \STATE \textbf{return} Unsatisfiable($UC_\sigma$) \COMMENT{\textit{because $\ccala{\sigma}$ is \kn{}-unsatisfiable}}
	\ELSIF{$t_\sigma$ = (sat, $\vartheta_\sigma$)}
        \STATE \COMMENT{\textit{Check box and diamond clauses that fire under classical valuation  $\vartheta_\sigma$}}
	    \FOR {$(c \rightarrow \wdia d) \in $ \code{Trie[$\sigma$].DiaCl} \mbox{ with } $c \in \vartheta_\sigma$ }
	        \STATE Let $B = \{ b \; | \; (a \rightarrow \wbxa b) \in
				\mbox{\code{Trie[$\sigma$].BoxCl}} \mbox{ and } a \in \vartheta_\sigma\}$
                                \STATE \COMMENT{\textit{evaluate the next $\alpha$-successor using the jump rule}}
                    \IF {\cegartab($(d;B)$, \code{Trie[$\sigma$].child($\alpha$)}) = Unsatisfiable($UC_{\sigma;\alpha}$)}
	            \STATE $CS := \{c\} \cup \{ a \; | \; (a \rightarrow \wbxa b) \in \mbox{\code{Trie[$\sigma$].BoxCl}}
                                 \mbox{ and } a \in \vartheta_\sigma \mbox{ and } b \in UC_{\sigma;\alpha}\}$ 
                    \STATE{ Let $\varphi := \bigvee_{l \in CS} \neg l$} 
	            \STATE addClause(\code{Trie[$\sigma$].sat}, $\varphi$)
                    \COMMENT{ \textit{Learn new clause $\varphi := \lnot \bigwedge CS$}}
            \STATE \textbf{return} \cegartab($A$, \code{Trie[$\sigma$]}) \COMMENT{ \textit{apply (restart) }}
	        \ENDIF
        \ENDFOR
            \STATE \textbf{return} Satisfiable \COMMENT{\textit{because every fired diamond is fulfilled}}
	\ENDIF
\end{algorithmic}
\end{algorithm}
\caption{Algorithm of \cegartab{} for multi-modal (description) logic \kn{} (\alc{})}
\label{fig:kn}
\end{figure}

We now describe the CEGAR-tableaux~\cite{DBLP:conf/tableaux/GoreK21} procedure 
extended to multi-modal logic 
\kn{} (aka \alc{})
without local (ABox) and global (TBox) assumptions.

Conceptually, the extension from mono-modal logic K to multi-modal logic \kn{} is easy because there
are no interactions between the different modalities, so we just have $n$ ``copies'' of K, one for
each modality $\wbxa$ for every agent $\alpha \in \ag$.

We give the extension of the main \cegartab{} algorithm from Gor\'e{} and
Kikkert~\cite{DBLP:conf/tableaux/GoreK21} as Algorithm~1 in Figure~\ref{fig:kn}, and make the connection
to the general form of $\mcnf(\varphi_0)$ but with a non-singleton $\ag$:
\begin{center}
  \begin{math}
    \begin{array}{lcl}
          \mcnf(\varphi_0)
          & :=
          & \ccal_{\eps} \semic \bigcup_{\sigma\in\ag^1}^{} \wbx{\sigma}\ccal_{\sigma}
          \semic\cdots\semic
                    \bigcup_{\sigma\in\ag^\kappa}^{} \wbx{\sigma} \ccal_{\sigma}
    \end{array}
  \end{math}
\end{center}

Because \cegartab{} handled only mono-modal logics,
the mcnf of a given formula $\varphi_0$
of modal depth
$\kappa$
was as below for a singleton $\ag = \{\alpha\}$ (say)
:
\[
  \begin{array}{lcl}
\mcnf(\varphi_0)  :=  \ccal_{\eps}
    \semic \wbxa\ccal_{\alpha}
    \semic \wbx{\alpha^2}\ccal_{\alpha^2}
              \semic \cdots 
              \semic \wbx{\alpha^\kappa}\ccal_{\alpha^\kappa}
\end{array}
\]
Conceptually,
these ``modal contexts'' were stored in a linear Trie (a.k.a.\ prefix tree) data structure with
\texttt{Node[0]} initialised to contain $\ccal_{\eps}$
and node
\texttt{Node[i]},
for $i > 0$,
initialised to contain $\ccal_{\alpha^i}$, eliding 
the box-prefix $\wbx{\alpha^i}$ without loss of generality,
but saving quadratic space.

In the multi-modal case, we use a multi-dimensional trie and all trie-edges are labelled by 
an agent name
and the trie-root is labelled with
$\epsilon$.
For a regular expression $\sigma$, we write $\code{Trie}[\sigma]$ for the corresponding \code{TrieNode}.
For example, from the trie-root node $\code{Trie}[\eps] = \ccal_{\epsilon}$
there is a trie-path labelled by an $\alpha$-edge followed by a $\beta$-edge
to the trie-node
$\code{Trie}[\alpha ; \beta] = \ccal_{\alpha ; \beta}$.

In Algorithm~1 in Figure~\ref{fig:kn}, and in the implementation, each
$\ccal_{\sigma}$ is stored at trie-position
$\texttt{Trie}[\sigma]$
in four fields
by partitioning 
it
into its three disjoint components
$\ccal_{\sigma} = (\ccpl{\sigma} \semic \cdia{\sigma} \semic$ $\cbox{\sigma})$
and adding ``next'' pointers
as follows:
\begin{enumerate}
  \item \texttt{Trie$[\sigma]$.sat} is a dedicated SAT-solver for this trie-node pre-loaded with $\ccpl{\sigma}$
  \item \texttt{Trie$[\sigma]$.DiaCl} contains 
    $\cdia{\sigma} :=  \bigcup_{\alpha \in \ag} \{c \limp \wdia d \mid (c \limp \wdia d) \in \ccal_\sigma\}$
  \item \texttt{Trie$[\sigma]$.BoxCl} contains 
     $\cbox{\sigma} := \bigcup_{\alpha \in \ag} \{a \limp \wbxa b \mid (a \limp \wbxa b) \in \ccal_\sigma\}$
  \item \texttt{Trie$[\sigma]$.Child($\alpha$)} is (a pointer to) the node $\mathtt{Trie}[\sigma;\alpha]$, for each $\alpha\in\ag$.
\end{enumerate}

The second and third fields are structured further so that we can select the dia-clauses
and box-clauses for
a particular agent $\alpha \in \ag$
via
\texttt{Trie$[\sigma]$.DiaCl($\alpha$)}
and
\texttt{Trie$[\sigma]$.BoxCl($\alpha$)}, respectively.
We often write ``formula in a \code{TrieNode}'' or even
``$\varphi \in$ \texttt{Trie[$\sigma$]}" to refer to formulae that are stored in one of these fields.

Recall that $r$ names the initial formula but it is not in the root
\code{Trie[$\eps$]}.
Now we simply
start by calling
\cegartab($\{r\}$, \code{Trie[$\eps$]})
as shown in Figure~\ref{fig:kn}.


\begin{example}
Let us continue with our example.
We know that $\varphi_0$ is equi-satisfiable with 
$r ~;~ \ccal_\eps{} ~;~ \wbxa \ccal_{\alpha}$, where
  $\ccpl{\eps} = \emptyset$ and
  $\cdia{\eps} = (r \limp \wdia\lnot q)$ and
  $\cbox{\eps} = (r \limp \wbxa b  ~;~ r \limp \wbxa p)$,
  meaning
  $\texttt{Trie}[\eps].\texttt{sat}$ is  \code{s0}.
  We also have
  $\ccpl{\alpha} = (b \limp \lnot p \lor q)$ and
  $\cdia{\alpha} = \emptyset$ and
  $\cbox{\alpha} = \emptyset$,
  meaning
  \texttt{Trie[$\alpha$].sat} is  \code{s1}.

  \paragraph{Recursion 0.}
  The initial call to
  \cegartab($\{r\}$, \code{Trie[$\eps$]})
  sets $\sigma:= \eps$ and 
  $A := \{r\} = \ass{\eps}$ 
  so line 2 is 
  \code{solve(s0, $\ass{\eps}$)}: it returns
  $(sat, \vartheta_\eps)$ where $\vartheta_\eps = \{r\}$
  so we enter the ``then'' part of line 5.
  At line  7, there is only one ``fired'' dia-clause
  $(r \limp \wdia \lnot q) \in \code{Trie[$\eps$].DiaCl} = \cdia\eps$ with $r \in \vartheta_\eps$,
  so $c :=r$ and $d := (\lnot q)$
  and we know that $\wdia\lnot q$ must be (modally) true at modal depth 0.
  Under
  $\vartheta_\eps = \{r\}$,
  every box-clause in $\cbox{\eps}$ ``fires'',
  thus the set
  $\{\wbxa b, \wbxa p\}$
  is (modally) true at modal depth 0, and hence
  the set
  $B = \{b, p \}$ formed at line 8
  must be classically true at some $R_\alpha{}$-successor for $\lnot q$ at modal depth 1.
  To evaluate this $R_\alpha$-successor, we compute
  $(d;B) = \{b, p, \lnot q\}$ at line 10
  and recurse via
  \cegartab($\{b, p, \lnot q\}$, \code{Trie[$\alpha$].Child($\alpha$)}).

\paragraph{Recursion 1.}
  So 
  $\sigma := \alpha$ and $A := \{b, p, \lnot q\} = \ass\alpha$
  and the relevant SAT-solver is the ``next'' one
  so \code{Trie[$\alpha$].sat} = \code{s1}.
  Line 2 calls
  this SAT-solver, so this is the call
  \code{solve(s1, $\ass{\alpha}$)}: it must return
  $(unsat, \ucore\alpha)$ where
  $\ucore{\alpha} = \{b, p, \lnot q \}$
  so we return Unsatisfiable($\{b, p, \lnot q \}$)
  at line 4 and pop the recursion stack.
  \paragraph{Recursion 0.}
  We return to line 11 
  where $\sigma := \eps$ and so
  $(\sigma ; \alpha) = \alpha$,
  so $UC_{\sigma ; \alpha} = UC_\alpha = \{b, p, \lnot q \}$.
  We know the chosen $\vartheta_\eps$ at modal depth 0 caused a clash
  at modal depth 1 involving  $\{b,p, \lnot q\}$,
  and hence that
  $\{\wbxa b, \wbxa p, \wdia\lnot q\}$ cannot be jointly true at modal depth 0.
  At line 11 we trace the relevant box- and dia-clauses to
  find the conflict set $CS = \{r\}$ at modal depth 0.
  To avoid this ``mistake'', we compute 
  the ``refinement'' $\varphi = \lnot r$ at line 12 of recursion level 0.
  Line 13 is then just \code{addClause(s0, $\lnot r$)}
  so $\ccpl\eps = \{\lnot r\}$.
  The refined $\mcnf(\varphi_0)$
  demands $r$ is false at the root.
  At line 14, we call 
  \cegartab($\{r\}$, \code{Trie[$\eps$]})
  but we stay at the current SAT-solver at modal depth 0
  thereby restarting \code{s0}.
\paragraph{Recursion 1.} 
  Line 2 calls
  \code{solve(s0, $\ass{\eps}$)}: it returns
  $(unsat, \ucore\eps)$ where
  $\ucore\eps = r$
  which returns Unsatisfiable($\{r\}$) 
  to line 14 of Recursion 0:
  our final answer.
\end{example}

\begin{theorem}
  If ~$\mathtt{Trie}$ contains
  $\mcnf(\varphi_0)$ and $r$ names $\varphi_0$ then
  \cegartab($\{r\}$, \code{Trie[$\eps$]})
  terminates, and 
  returns Satisfiable iff 
  $\varphi_0$ is \kn{}-satisfiable.
\end{theorem}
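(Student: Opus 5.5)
The plan is to prove a stronger invariant about an arbitrary call \cegartab($A$, \code{Trie[$\sigma$]}) by induction on the trie depth $\kappa - |\sigma|$, and then instantiate it at $\sigma = \eps$, $A=\{r\}$. Write $\ccal^{\geq}_\sigma$ for the ``subtrie'' clause set $\ccal_\sigma \semic \bigcup_{\alpha}\wbxa\ccal^{\geq}_{\sigma;\alpha}$, i.e.\ the clauses stored at $\code{Trie}[\sigma]$ together with all clauses below it, the latter boxed appropriately. By Proposition~\ref{prop:mcnf}, $\varphi_0$ is equi-satisfiable with $r \land \widehat{\ccal^{\geq}_\eps}$. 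The invariant has three parts: (i) the call terminates; (ii) if it returns Satisfiable then $A \semic \ccal^{\geq}_\sigma$ is true at the root of some Kripke model; (iii) if it returns Unsatisfiable($UC$) then $UC\subseteq A$ and $UC \semic \ccal^{\geq}_\sigma$ is \kn-unsatisfiable.

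The invariant depends on the fact that every learned clause is a sound consequence. Clauses are added to \code{Trie[$\sigma$].sat} at line 13, so I will first show that every learned clause $\lnot\bigwedge CS$ is \kn-entailed at the root by $\ccal^{\geq}_\sigma$. This is a local argument using (iii) for the child. Suppose $c$ holds and every $a\in CS$ holds. The dia-clause then gives an $\alpha$-successor satisfying $d$. The box-clauses with $a\in CS$ force each $b\in UC_{\sigma;\alpha}$ at that successor, and the boxed clauses force $\ccal^{\geq}_{\sigma;\alpha}$ there. Here $UC_{\sigma;\alpha}\subseteq (d;B)$, and any $b$ in $UC_{\sigma;\alpha}$ other than $d$ comes from a box-clause whose $a$ was put into $CS$. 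So the successor satisfies $UC_{\sigma;\alpha}\semic \ccal^{\geq}_{\sigma;\alpha}$, contradicting (iii). Hence the clause set at each node only grows by consequences of $\ccal^{\geq}_\sigma$, and soundness of the Unsatisfiable answer at line 4, part (iii), follows from correctness of the SAT-solver.

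For (ii), on the Satisfiable branch every fired dia-clause $c\limp\wdia d$ has a successful recursive call on $(d;B)$. The inductive hypothesis gives a model $M_{c}$ rooted at $w_c$ satisfying $(d;B)\semic\ccal^{\geq}_{\sigma;\alpha}$. I will build a root $w$ with valuation $\vartheta_\sigma$ and an $R_\alpha$-edge to each $w_c$. The cpl-clauses hold at $w$ by $\vartheta_\sigma$. Fired dia-clauses hold by construction, and unfired ones hold vacuously. A fired box-clause $a\limp\wbxa b$ holds because $b\in B$ at every $\alpha$-successor. Since every $\alpha$-successor is some $w_c$, the boxed lower clauses $\wbxa\ccal^{\geq}_{\sigma;\alpha}$ also hold. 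At a leaf there are no modal clauses, so the root alone suffices. Instantiating at the root, Satisfiable implies $\varphi_0$ is satisfiable. Conversely, Unsatisfiable($UC$) with $UC\subseteq\{r\}$ implies $r\semic\ccal^{\geq}_\eps$ is unsatisfiable: if $UC=\emptyset$ the clause set alone is unsatisfiable, and otherwise $UC=\{r\}$. In either case $\varphi_0$ is unsatisfiable, which gives the ``iff''.

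I expect termination, part (i), to be the main obstacle because of the restart at line 14. The inner recursive call at line 10 terminates by the inductive hypothesis on depth. For the restart, I will argue that the learned clause $\lnot\bigwedge CS$ is falsified by the current $\vartheta_\sigma$, since $CS\subseteq\vartheta_\sigma$. Hence it is genuinely new, and every subsequent valuation returned at this node differs from all earlier ones on the finitely many atoms of $\ccal_\sigma$ and $A$. There are only finitely many such valuations, and only finitely many distinct clauses over these literals can be learned, so at a fixed node, with a fixed $A$, only finitely many restarts can occur before the solver returns unsat or all fired diamonds succeed. A lexicographic measure, pairing the depth with the number of valuations not yet excluded, then gives termination overall. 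The subtle point is that $A$ is fixed across restarts at line 14, so this measure decreases strictly.
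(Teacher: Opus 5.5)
Your proof is correct. It rests on the natural ingredients for CEGAR-tableaux:
\begin{itemize}
\item a learned clause $\lnot\bigwedge CS$ is sound because the child's unsatisfiable core, pulled back through the fired dia-clause and the fired box-clauses, is \kn-contradictory;
\item a Satisfiable answer lets you glue the children's models under a root carrying $\vartheta_\sigma$;
\item each restart excludes the current valuation, so only finitely many restarts happen at a node.
\end{itemize}

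The route differs from the paper's, though that proof is only a pointer. The paper defers to the mono-modal proofs of Gor\'e and Kikkert, which are organised as an induction on the number of restarts. You instead make the trie depth $\kappa-|\sigma|$ the outer induction. Restarts then serve only as an inner well-founded measure: the shrinking set of admissible valuations at a node with fixed $A$. You derive soundness of learned clauses bottom-up from the child's invariant (iii).

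The restart-based organisation carries over from the linear trie of K, which is why the paper can dispose of the theorem in one line. Your depth-based invariant is self-contained, and it makes explicit what the branching multi-modal trie actually adds. First, clauses learned at \code{Trie[$\sigma$]} are consequences of the subtrie $\ccal^{\geq}_\sigma$ alone, independent of $A$. They therefore stay sound when the same node is re-entered under different assumptions. Second, box-clauses must be matched to the agent of the diamond when successors are glued.

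One assumption should be stated explicitly. Your model construction needs $\vartheta_\sigma$ to be total on the atoms of $\ccal_\sigma$, so that a clause ``fires'' exactly when its antecedent is true at the new root. The paper's convention is different: valuations are partial literal sets, completed by making missing atoms false. Under that convention, an unfired clause $\lnot p \limp \wdia d$ with $p$ unassigned would have a true antecedent at the root, and your step ``unfired ones hold vacuously'' would fail.
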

\begin{proof}
  A simple modification of the proofs for mono-modal \myk{} from Gor\'e and Kikkert~\cite{DBLP:conf/tableaux/GoreK21} which proceed
  by induction on the number of restarts.
\end{proof}

\section{CEGAR-tableaux with different (\texorpdfstring{\scsat}{}) shortcuts}\label{chap:satshortcuts}

We previously showed~\cite{DBLP:conf/tableaux/GoreK21} 
that \cegarbox{}
outperformed all solvers we tested on the
extended LWB K-benchmarks, and 3CNF K-benchmarks, mostly by orders of magnitude. However, on the MQBF K-benchmarks, \cegarbox{} loses
to \KSP{} by about 50 problems.
We first determine why and then explore two solutions.

\subsection{Why does \cegarboxpp{} not win on the MQBF benchmarks?}
\begin{figure}[t]
  \centering{}
  \includegraphics[scale=0.5]{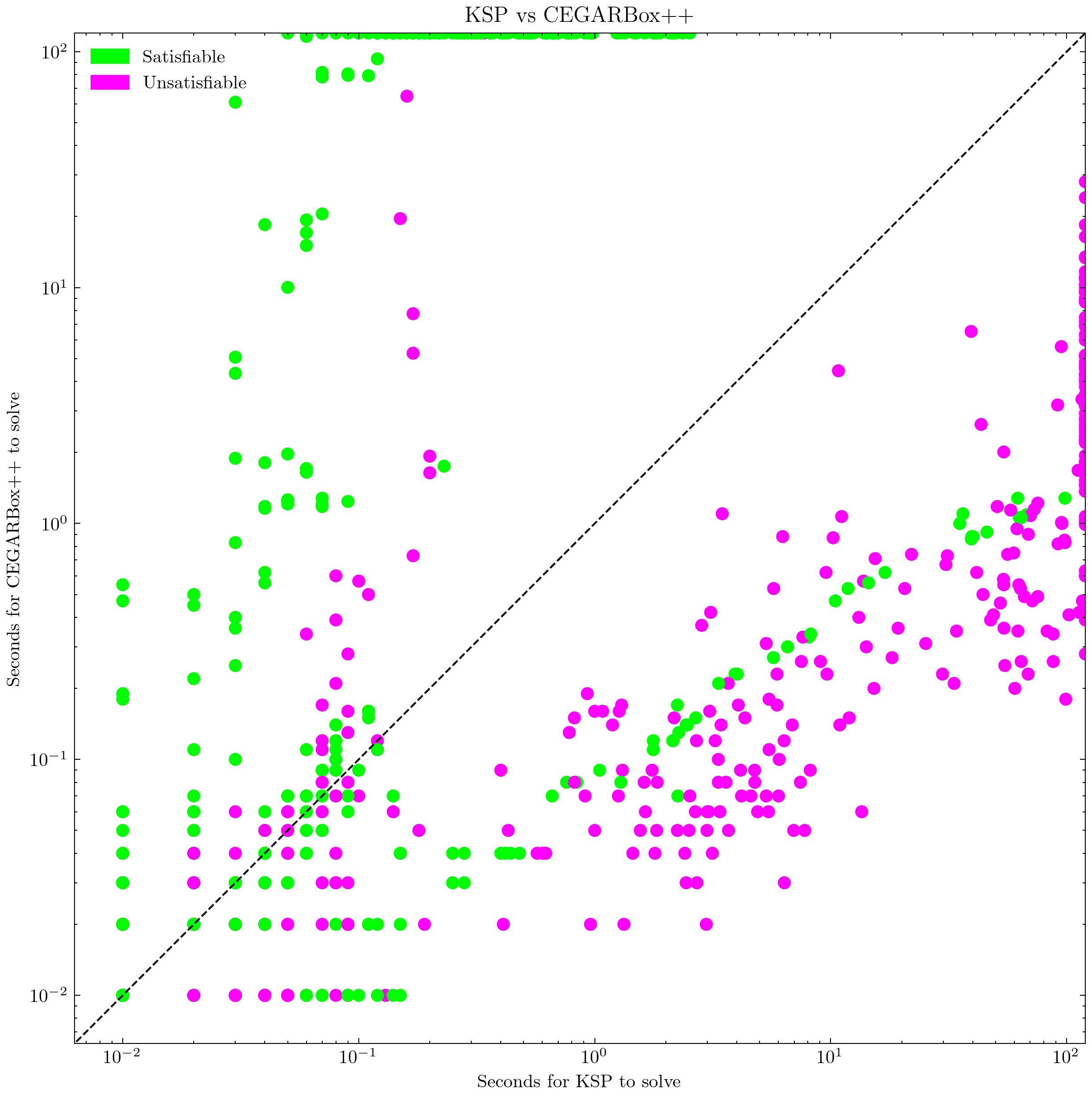}
  \caption{Performance of \cegarboxpp{} and \KSP{} on the MQBF benchmarks. Problems on the top/right edge indicate a timeout
    for \cegarboxpp{}/\KSP{} respectively.}
    \label{fig:mqbf-before}
\end{figure}

Figure~\ref{fig:mqbf-before}
shows
that
\cegarboxpp{} solves all unsatisfiable benchmarks but struggles on the satisfiable ones, while \KSP{} 
solves all satisfiable benchmarks but struggles on the unsatisfiable ones. \KSP{} 
performs better overall as there are
617 satisfiable and 399 unsatisfiable problems in the MQBF benchmarks.

By inspecting the individual (satisfiable) benchmarks that \cegarboxpp{} struggles on, we saw that
\cegarboxpp{} learns few clauses, and instead spends most of its time building a large model.
That is, because \cegarboxpp{} explicitly builds a satisfying model, a huge model can cause it to timeout. This is particularly noticeable in
benchmarks with lots of branching, as
it can result in satisfying models having an exponential number of worlds with respect to
the size of the original benchmark. \KSP{} on the other hand doesn't need to explicitly build a model to deem a formula as
satisfiable, it simply applies rules until \textit{saturation}.

This is a fundamental weakness in not just \cegartab{}, but the CEGAR procedure itself. For a formula $\varphi$, we first
test whether an under-approximation $\check{\varphi}$ consisting of the cpl-clauses of its mcnf is cpl-satisfiable.
If  $\check{\varphi}$ is cpl-unsatisfiable, we can perform an
(\scunsat{})
shortcut by immediately returning UNSAT as
$\varphi$
must also be modally unsatisfiable. However, there is no corresponding
(\scsat)
shortcut.

Our observation is not new.
Brummayer~\cite{DBLP:conf/eurocast/BrummayerB09} devised an under-approximation
for SMT that allows for early termination in the UNSAT case. Wang et al.~\cite{DBLP:conf/fmcad/WangGI07}
used induction to show the existence of large counter-models in model-checking.
These frameworks are domain specific, so it is unlikely their insights would apply to \cegartab{}. However,
for SAT-solving,
Lagniez et al.~\cite{DBLP:conf/ijcai/LagniezBLM17} presented a \textit{general} variant of CEGAR, which allows for
both (\scsat) and (\scunsat) shortcuts.


We first analyse the RECAR approach of
Lagniez et al.~\cite{DBLP:conf/ijcai/LagniezBLM17}. We have previously shown~\cite{DBLP:conf/tableaux/GoreK21}
their implementation was unsound and their benchmarks were
incorrect, but the theory of RECAR is sound
so we incorporate their ideas~\cite{DBLP:conf/ijcai/LagniezBLM17} to create our own variant of RECAR-Tableaux.

We also create our own novel approach
by detecting fixpoints in our under-approximation.
Although it is specific to recursive
CEGAR algorithms,
we show that
it works exceptionally well, allowing
\cegarboxpp{} to solve all problems in the MQBF
benchmark set bar one.
Our approach combines \KSP{} with
\cegarboxpp{},
giving a solver that elegantly
combines SAT, tableaux and resolution methods, which previously were the three competing methods for K-satisfiability.

\subsection{Background on modal resolution and RECAR}
We first give some background
on modal resolution and the RECAR framework.

\subsubsection{Modal clausal resolution (\texorpdfstring{\KSP}{}).}
\label{sec:how-ksp-works}


Recalling the definition of
\snfmc{}, we briefly outline the modal calculus rules of \KSP{}~\cite{DBLP:journals/tocl/NalonDH19} for multi-modal \kn{}, which is a resolution prover that operates on
\snfml{}, a close variant of \snfmc{}.

\begin{proposition}
  \label{prop:snfml}
  A formula $\varphi_0$ of modal depth $\kappa$ can be put into the equi-satisfiable separated normal form with modal layers
  $\snfml(\varphi_0)$ of Nalon et al~\cite{DBLP:conf/ijcai/NalonHD17}, who prove this proposition in detail,
  and where each $\ccal_i$ below is a set of modal clauses:
  \begin{center}
    \begin{math}
        \begin{array}{lcl}
          \snfml(\varphi_0) & := & \ccal_{0} \semic \wbx{}^1 \ccal_{1} \semic \cdots \semic \wbx{}^\kappa \ccal_{\kappa}
        \end{array}
      \end{math}
    \end{center}
  \end{proposition}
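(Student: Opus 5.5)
We will prove Proposition~\ref{prop:snfml} by reduction to Proposition~\ref{prop:mcnf}, rather than redoing the renaming argument from scratch. First, apply Proposition~\ref{prop:mcnf} to $\varphi_0$ to obtain the equi-satisfiable form $r \land \widehat{\mcnf(\varphi_0)}$, where
\[
\mcnf(\varphi_0) = \ccal_{\eps} \semic \bigcup_{\sigma\in\ag^1}\wbx{\sigma}\ccal_{\sigma}\semic\cdots\semic\bigcup_{\sigma\in\ag^\kappa}\wbx{\sigma}\ccal_{\sigma}.
\]
Then define the layer sets by collapsing all prefixes of the same length: put $\ccal_0 := \{r\}\cup\ccal_\eps$, and for $1\le i\le\kappa$ put $\ccal_i := \bigcup_{\sigma\in\ag^i}\ccal_\sigma$. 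Read $\wbx{}^i\psi$ as the layer annotation of \snfml{}, meaning that $\psi$ holds at every world reachable from the root by any path of length exactly $i$, over any mixture of agents. This matches the modal-layer semantics of Nalon et al.

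Next we show equi-satisfiability in both directions. Passing from the \snfml{} form to the mcnf form is immediate: the layer-$i$ constraint implies $\wbx{\sigma}\ccal_\sigma$ for each $\sigma\in\ag^i$, because $\ccal_\sigma\subseteq\ccal_i$. The reverse direction is the real content. A clause in $\ccal_\sigma$ is only required along $\sigma$-paths, whereas in $\ccal_i$ it is required along every length-$i$ path.

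To handle this, we rely on a property of the renaming: the fresh names introduced at different trie positions are pairwise distinct. So every clause of $\ccal_\sigma$ has the form $x\limp\cdots$ with antecedent a name $x$ created for position $\sigma$, or is a disjunction guarded by $\lnot x$ for such a name. Given a model $M$ of the mcnf form, modify the valuation so that every name introduced at position $\sigma$ is false at each world that is not the endpoint of a $\sigma$-path from the root. We may first unravel $M$ into a tree, so that each world has a unique path and hence a unique $\sigma$. With these names false off their own paths, every foreign clause is vacuously true, and the original literals of $\varphi_0$ are untouched. An induction on depth then shows that the root still satisfies $r$ and that each world satisfies its own $\ccal_\sigma$.

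The main obstacle is exactly this guardedness check: every clause in $\ccal_\sigma$ has to be triggered only by a name local to $\sigma$. Clauses such as $b\limp\lnot p\lor q$ are guarded correctly. However, a cpl-clause produced by NNF/renaming of a disjunction might have no positive name antecedent, and I would first verify that the \snfml{} construction always attaches the introducing name as an antecedent. Should that fail, the fallback is to cite Nalon et al.~\cite{DBLP:conf/ijcai/NalonHD17} directly, as the statement permits.
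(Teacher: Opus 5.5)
\textbf{There is a genuine gap in the hard direction.} The paper proves nothing here; it defers entirely to Nalon et al. So your reduction to Proposition~\ref{prop:mcnf} is a different route. Its easy direction is fine: the layered form implies the mcnf form because $\ccal_\sigma\subseteq\ccal_{|\sigma|}$.

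The gap is the step you postpone, and it is not a detail. Collapsing to $\ccal_i:=\bigcup_{\sigma\in\ag^i}\ccal_\sigma$ is unsound for clause sets that have only the shape and equisatisfiability that Proposition~\ref{prop:mcnf} promises. Take $\varphi_0=\wdia p\land\wdi{\beta}\lnot p$ and the set $\ccal_\eps=\{r\limp\wdia x,\ r\limp\wdi{\beta}x\}$, $\ccal_\alpha=\{x\limp p\}$, $\ccal_\beta=\{x\limp\lnot p\}$. Together with $r$ it has the required shape and is satisfiable. Yet its collapse puts both $x\limp p$ and $x\limp\lnot p$ into $\ccal_1$, while the root needs an $\alpha$-successor satisfying $x$. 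Replacing the depth-one sets by the unguarded $\{p\}$ and $\{\lnot p\}$ fails the same way.

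So you cannot treat Proposition~\ref{prop:mcnf} as a black box. You must open the renaming and prove three properties:
\begin{enumerate}
\item every clause of $\ccal_\sigma$ other than the unit $r$ has a positive antecedent that names a subformula occurrence at $\sigma$;
\item names introduced at distinct positions are distinct;
\item clauses of $\ccal_\sigma$ mention only names of $\sigma$, plus names of $\sigma;\alpha$ in the consequent of an $\alpha$-box- or dia-clause.
\end{enumerate}
Without (iii), your resetting of foreign names can falsify a world's own clauses; your ``induction on depth'' silently assumes it. As it stands, you leave (i) unverified, leave (iii) unstated, and aim the check at the \snfml{} construction rather than at the mcnf you are actually collapsing. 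Your fallback of citing Nalon et al.\ is just what the paper does.

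\textbf{A shorter finish, and a remaining caveat.} Once these properties are established for the actual naming procedure, your valuation surgery goes through. There is also a shorter finish. It needs only that the clauses are the definitional clauses $x\limp\cdots$ of a renaming in which each name $x$ stands for a single subformula $\psi_x$. Take any model of $\varphi_0$ and make $x$ true exactly where $\psi_x$ holds. Then every renaming clause is true at every world, $r$ is true at the root, and every collapsed layer holds, with no unravelling or locality bookkeeping.

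Even completed, however, what you build is the collapsed mcnf. The paper itself says this agrees with Nalon et al.'s $\snfml(\varphi_0)$ only approximately, because \ksp{} needs extra clauses that \cegarboxpp{} does not. Your argument therefore shows that some equisatisfiable layered form of the displayed shape exists. It does not show that Nalon et al.'s own transformation, which is what \ksp{} runs on, is equisatisfiable; that part genuinely rests on the citation.
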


\begin{figure}[t]
\begin{align*}
  \text{[LRES]} & \quad \frac{n : D \vee l \quad n : D' \vee \neg l}{n : D \vee D'}
  \hspace*{1cm}
  \text{[MRES]} \quad \frac{n : l_1 \Rightarrow \wbxa \, l \quad n : l_2 \Rightarrow \wdia \neg l}{n : \neg l_1 \vee \neg l_2}
  \\ \\
\text{[GEN1]} & \quad \frac{
\begin{aligned}
n : l'_{1} \Rightarrow \wbxa \, \neg l_1 
, \cdots ,
n : l'_{m} \Rightarrow \wbxa \, \neg l_m ,
n : l' \Rightarrow \wdia \, \neg t ,
n+1 : l_1 \vee \dots \vee l_m \vee t
\end{aligned}
}{
n : \neg l'_{1} \vee \dots \vee \neg l'_{m} \vee \neg l'
                                         }
  \\ \\
  \text{[GEN2]} & \quad \frac{n : l'_{1} \Rightarrow \wbxa \, l_1 \quad n : l'_{2} \Rightarrow \wbxa \, \neg l_1 \quad n : l'_{3} \Rightarrow \wdia \, l_2}{n : \neg l'_{1} \vee \neg l'_{2} \vee \neg l'_{3}}
  \\ \\
\text{[GEN3]} & \quad \frac{
\begin{aligned}
n : l'_{1} \Rightarrow \wbxa \, \neg l_1 
, \cdots ,
n : l'_{m} \Rightarrow \wbxa \, \neg l_m ,
n : l' \Rightarrow \wdia \, l ,
n+1 : l_1 \vee \dots \vee l_m
\end{aligned}
}{
n : \neg l'_{1} \vee \dots \vee \neg l'_{m} \vee \neg l'
}
\end{align*}
\caption{The resolution rules used by \ksp{}~\cite{DBLP:journals/tocl/NalonDH19}.}
\label{fig:ksp-resolution-rules}
\end{figure}

Informally, but not exactly because \ksp{} requires extra clauses that are not required by \cegarboxpp{}, 
  we have
  $\ccal_{l} \approx \bigcup_{\sigma\in\ag^l} \ccal_{\sigma}$, where
  $\ccal_{\sigma}$ is from our $\mcnf(\varphi_0)$.
  That is, each $\ccal_{l}$ is the ``union'' formed by taking a ``horizontal slice'' across layer
  $l = \mid\sigma\mid$ of our ``vertical tree-like'' $\mcnf(\varphi_0)$,
  where $\mid\sigma\mid$ is the length of the Trie-path $\sigma$.
But now, $\snfml(\varphi_0)$ can be implemented as a linear trie, rather than a branching trie as for $\mcnf(\varphi_0)$.
Let us write $n : \psi$ to indicate the clause $\psi$ is stored at the $n$-th node in the (linear) trie,
where the root is element $0$.
Using this notation, the resolution rules used by \KSP{} are presented in Figure~\ref{fig:ksp-resolution-rules}.

They differ from normal resolution as clauses are now labelled by their modal layer.
Applying these rules
will either derive an empty clause, meaning $\varphi_0$ is
\kn{}-unsatisfiable, or terminate after saturation, meaning $\varphi_0$ is
\kn{}-satisfiable.
Observe the following close connection between modal resolution and 
\cegartab{}: all resolvents in \KSP{} are classical clauses as are all learnt clauses in \cegartab{}. We will
use this insight to combine both these 
approaches.

\subsubsection{The RECAR framework (\mosaic{}).}

\begin{figure}[t]
  \centering
    \resizebox{0.8\textwidth}{!}{\tikzset{every picture/.style={line width=0.75pt}} 

\begin{tikzpicture}[x=0.75pt,y=0.75pt,yscale=-1,xscale=1]

\draw  [fill={rgb, 255:red, 250; green, 240; blue, 120 }  ,fill opacity=1 ] (227,404.8) .. controls (227,401.04) and (230.04,398) .. (233.8,398) -- (342.2,398) .. controls (345.96,398) and (349,401.04) .. (349,404.8) -- (349,425.2) .. controls (349,428.96) and (345.96,432) .. (342.2,432) -- (233.8,432) .. controls (230.04,432) and (227,428.96) .. (227,425.2) -- cycle ;
\draw  [fill={rgb, 255:red, 160; green, 255; blue, 255 }  ,fill opacity=1 ] (290.5,479) -- (344,520.64) -- (290.5,562.29) -- (237,520.64) -- cycle ;
\draw    (237,520.64) -- (166,520.02) ;
\draw [shift={(164,520)}, rotate = 0.5] [color={rgb, 255:red, 0; green, 0; blue, 0 }  ][line width=0.75]    (10.93,-3.29) .. controls (6.95,-1.4) and (3.31,-0.3) .. (0,0) .. controls (3.31,0.3) and (6.95,1.4) .. (10.93,3.29)   ;
\draw    (291,432) -- (290.52,477) ;
\draw [shift={(290.5,479)}, rotate = 270.61] [color={rgb, 255:red, 0; green, 0; blue, 0 }  ][line width=0.75]    (10.93,-3.29) .. controls (6.95,-1.4) and (3.31,-0.3) .. (0,0) .. controls (3.31,0.3) and (6.95,1.4) .. (10.93,3.29)   ;
\draw  [fill={rgb, 255:red, 161; green, 255; blue, 255 }  ,fill opacity=1 ] (150.5,578) -- (204,623.57) -- (150.5,669.14) -- (97,623.57) -- cycle ;
\draw  [fill={rgb, 255:red, 250; green, 240; blue, 120 }  ,fill opacity=1 ] (382,503.8) .. controls (382,500.04) and (385.04,497) .. (388.8,497) -- (497.2,497) .. controls (500.96,497) and (504,500.04) .. (504,503.8) -- (504,524.2) .. controls (504,527.96) and (500.96,531) .. (497.2,531) -- (388.8,531) .. controls (385.04,531) and (382,527.96) .. (382,524.2) -- cycle ;

\draw    (265,543.14) -- (183.65,599.01) ;
\draw [shift={(182,600.14)}, rotate = 325.52] [color={rgb, 255:red, 0; green, 0; blue, 0 }  ][line width=0.75]    (10.93,-3.29) .. controls (6.95,-1.4) and (3.31,-0.3) .. (0,0) .. controls (3.31,0.3) and (6.95,1.4) .. (10.93,3.29)   ;
\draw    (380,518.57) -- (346,520.53) ;
\draw [shift={(344,520.64)}, rotate = 356.71] [color={rgb, 255:red, 0; green, 0; blue, 0 }  ][line width=0.75]    (10.93,-3.29) .. controls (6.95,-1.4) and (3.31,-0.3) .. (0,0) .. controls (3.31,0.3) and (6.95,1.4) .. (10.93,3.29)   ;
\draw    (204,623.57) -- (258,624.12) ;
\draw [shift={(260,624.14)}, rotate = 180.58] [color={rgb, 255:red, 0; green, 0; blue, 0 }  ][line width=0.75]    (10.93,-3.29) .. controls (6.95,-1.4) and (3.31,-0.3) .. (0,0) .. controls (3.31,0.3) and (6.95,1.4) .. (10.93,3.29)   ;
\draw    (173,415.29) -- (219,415.29) ;
\draw [shift={(221,415.29)}, rotate = 180] [color={rgb, 255:red, 0; green, 0; blue, 0 }  ][line width=0.75]    (10.93,-3.29) .. controls (6.95,-1.4) and (3.31,-0.3) .. (0,0) .. controls (3.31,0.3) and (6.95,1.4) .. (10.93,3.29)   ;
\draw  [fill={rgb, 255:red, 161; green, 255; blue, 255 }  ,fill opacity=1 ] (288.5,701) -- (342,746.57) -- (288.5,792.14) -- (235,746.57) -- cycle ;
\draw    (176,647.57) -- (258.53,723.36) ;
\draw [shift={(260,724.71)}, rotate = 222.56] [color={rgb, 255:red, 0; green, 0; blue, 0 }  ][line width=0.75]    (10.93,-3.29) .. controls (6.95,-1.4) and (3.31,-0.3) .. (0,0) .. controls (3.31,0.3) and (6.95,1.4) .. (10.93,3.29)   ;
\draw  [fill={rgb, 255:red, 161; green, 255; blue, 255 }  ,fill opacity=1 ] (432.5,580) -- (486,625.57) -- (432.5,671.14) -- (379,625.57) -- cycle ;
\draw    (320,725.57) -- (401.49,654.32) ;
\draw [shift={(403,653)}, rotate = 138.84] [color={rgb, 255:red, 0; green, 0; blue, 0 }  ][line width=0.75]    (10.93,-3.29) .. controls (6.95,-1.4) and (3.31,-0.3) .. (0,0) .. controls (3.31,0.3) and (6.95,1.4) .. (10.93,3.29)   ;
\draw    (378,625.57) -- (328,625.98) ;
\draw [shift={(326,626)}, rotate = 359.53] [color={rgb, 255:red, 0; green, 0; blue, 0 }  ][line width=0.75]    (10.93,-3.29) .. controls (6.95,-1.4) and (3.31,-0.3) .. (0,0) .. controls (3.31,0.3) and (6.95,1.4) .. (10.93,3.29)   ;
\draw    (554,612.71) -- (554,516.71) -- (509,516.03) ;
\draw [shift={(507,516)}, rotate = 0.87] [color={rgb, 255:red, 0; green, 0; blue, 0 }  ][line width=0.75]    (10.93,-3.29) .. controls (6.95,-1.4) and (3.31,-0.3) .. (0,0) .. controls (3.31,0.3) and (6.95,1.4) .. (10.93,3.29)   ;
\draw    (342,746.57) -- (553,744.57) -- (554,636.57) ;
\draw    (432.5,580) -- (432.02,534.57) ;
\draw [shift={(432,532.57)}, rotate = 89.4] [color={rgb, 255:red, 0; green, 0; blue, 0 }  ][line width=0.75]    (10.93,-3.29) .. controls (6.95,-1.4) and (3.31,-0.3) .. (0,0) .. controls (3.31,0.3) and (6.95,1.4) .. (10.93,3.29)   ;

\draw (103,406.4) node [anchor=north west][inner sep=0.75pt]    {$\mathrm{recar}( \varphi )$};
\draw  [fill={rgb, 255:red, 231; green, 127; blue, 152 }  ,fill opacity=1 ]  (128, 519.5) circle [x radius= 31.2, y radius= 31.2]   (128, 519.5) circle [x radius= 34.2, y radius= 34.2]  ;
\draw (101,511) node [anchor=north west][inner sep=0.75pt]   [align=left] {{\fontfamily{pcr}\selectfont UNSAT}};
\draw (171,495.4) node [anchor=north west][inner sep=0.75pt]    {($\mathrm{UNSAT}$)};
\draw (267.37,401.4) node [anchor=north west][inner sep=0.75pt]    {$\psi \leftarrow \check{\varphi }$};
\draw (261,511.4) node [anchor=north west][inner sep=0.75pt]  [font=\small]  {$\mathrm{check( \psi )}$};
\draw (120,605.4) node [anchor=north west][inner sep=0.75pt]    {$\vartheta \stackrel{?}{\models } \varphi $};
\draw (192,545.4) node [anchor=north west][inner sep=0.75pt]    {$\mathrm{sat,\ \vartheta \ }$};
\draw  [fill={rgb, 255:red, 184; green, 233; blue, 134 }  ,fill opacity=1 ]  (290.5, 624.5) circle [x radius= 29.81, y radius= 29.81]   (290.5, 624.5) circle [x radius= 32.81, y radius= 32.81]  ;
\draw (265,616) node [anchor=north west][inner sep=0.75pt]   [align=left] {{\fontfamily{pcr}\selectfont  \ SAT \ }};
\draw (216,599.4) node [anchor=north west][inner sep=0.75pt]    {$\mathrm{yes}$};
\draw (546,618.4) node [anchor=north west][inner sep=0.75pt]    {$\mathrm{no}$};
\draw (392.37,506.4) node [anchor=north west][inner sep=0.75pt]    {$\psi \leftarrow \mathrm{refine}( \psi )$};
\draw (252,732.11) node [anchor=north west][inner sep=0.75pt]    {$RC( \varphi ,\ \hat{\varphi }) \ \ $};
\draw (218,663.4) node [anchor=north west][inner sep=0.75pt]    {$\mathrm{unk.}$};
\draw (400,610.4) node [anchor=north west][inner sep=0.75pt]    {$\mathrm{recar}(\hat{\varphi })$};
\draw (333,601.4) node [anchor=north west][inner sep=0.75pt]    {($\mathrm{SAT}$)};
\draw (328,672.4) node [anchor=north west][inner sep=0.75pt]    {$\mathrm{yes}$};
\draw (441,548.4) node [anchor=north west][inner sep=0.75pt]    {$\mathrm{unsat\ }$};

\end{tikzpicture}}
    \caption{The RECAR-under framework~\cite{DBLP:conf/ijcai/LagniezBLM17}.
    }
  \label{fig:pdr-layers}
\end{figure}

RECAR (Recursive Explore and Check Abstraction Refinement) is an extension of the CEGAR approach which introduces a recursive step
to allow for an additional shortcut \cite{DBLP:conf/ijcai/LagniezBLM17}. There are two vaiants, RECAR-under, where the main CEGAR
loop contains an UNSAT shortcut
(\scunsat),
whilst the recursive step allows for a SAT shortcut
(\scsat), and
RECAR-over where the main CEGAR loop contains a
SAT shortcut
(\scsat),
whilst the recursive step allows for an UNSAT
shortcut.

We present just the RECAR-under algorithm here, as the main ``CEGAR'' loop of the RECAR-under framework has
(\scunsat)
shortcuts, like CEGAR-tableaux. This will help motivate our own application of RECAR later but note that the RECAR-over algorithm
can be defined analogously (and is what is used in \mosaic{}~\cite{DBLP:conf/ijcai/LagniezBLM17}). We require the following
assumptions, see Figure~\ref{fig:pdr-layers},
where $RC$ is a Boolean function that determines whether or not a recursive
  call occurs:
\begin{definition}{Assumptions for RECAR-under~\cite{DBLP:conf/ijcai/LagniezBLM17}}
    \begin{enumerate}
        \item Function `check' is a sound, complete and terminating implementation that decides if an input formula is K-unsatisfiable.
            
        \item Function $\mbox{refine}\left(\varphi\right)$ is a function that constrains $\varphi$ with more clauses so that: 
        \begin{enumerate}
            \item If the under-approximation $\check{\varphi}$ is K-unsatisfiable, then so is $\mbox{refine}(\check{\varphi})$.
            \item There exists an $n \in \mathbb{N}$ such that $\mbox{refine}^n(\check{\varphi})$ is K-satisfiable iff $\varphi$ is so.
        \end{enumerate}
    
        \item If the over-approximation $\hat{\varphi}$ is K-satisfiable, then so is $\varphi$.
        \item Let $\mbox{over}(\varphi) = \hat{\varphi}$. There exists $n \in \mathbb{N}$ such that
          $RC(\mbox{over}^n(\varphi), \mbox{over}^{n+1}(\varphi))$ evaluates to false (guaranteeing termination). 
    \end{enumerate}
\end{definition}

\begin{proposition}{}{}
    RECAR-under is sound, complete and terminating~\cite{DBLP:conf/ijcai/LagniezBLM17}.
\end{proposition}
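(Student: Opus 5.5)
We argue soundness, completeness and termination of RECAR-under from the four assumptions in the definition above, reading the algorithm off Figure~\ref{fig:pdr-layers}. The procedure $\mathrm{recar}(\varphi)$ has three kinds of exit. It starts with $\psi \leftarrow \check{\varphi}$ and calls $\mathrm{check}(\psi)$. On unsat it returns UNSAT. On sat with valuation $\vartheta$ it returns SAT if $\vartheta \models \varphi$. Otherwise it consults $RC(\varphi,\hat{\varphi})$: if $RC$ holds it calls $\mathrm{recar}(\hat{\varphi})$, returning SAT on a SAT answer. In the remaining cases it sets $\psi \leftarrow \mathrm{refine}(\psi)$ and loops.

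The plan is to prove soundness and termination jointly by well-founded induction on the recursion depth. Assumption~4 provides the measure: along any chain of recursive calls $\varphi, \mathrm{over}(\varphi), \mathrm{over}^2(\varphi), \ldots$, some $RC(\mathrm{over}^n(\varphi), \mathrm{over}^{n+1}(\varphi))$ is false. We treat the deepest call as the base case, and handle termination of the inner loop within one call separately.

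\textbf{Soundness of each exit.} We check each exit using the invariant that every $\psi$ produced equals $\mathrm{refine}^k(\check{\varphi})$ for some $k$.
\begin{itemize}
\item The UNSAT exit: by assumption~1, $\mathrm{check}$ correctly reports that $\psi$ is unsatisfiable. We then need $\varphi$ unsatisfiable. Assumption~2(a) only transports unsatisfiability upward along refine, so the natural route is to combine it with 2(b): refining an unsatisfiable $\psi$ keeps it unsatisfiable, and after $n$ steps satisfiability coincides with that of $\varphi$. Hence $\varphi$ is unsatisfiable.
\item The direct SAT exit: $\vartheta \models \varphi$ is immediate.
\item The recursive SAT exit: by the induction hypothesis $\hat{\varphi}$ is genuinely satisfiable, and assumption~3 transfers this to $\varphi$.
\end{itemize}

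\textbf{Completeness and termination.} Completeness follows from termination plus soundness, since the only outputs are SAT and UNSAT and each is sound. For termination of the inner loop, suppose no shortcut fires. Then each iteration replaces $\psi$ by $\mathrm{refine}(\psi)$, and by 2(b) after at most $n$ refinements $\psi$ is equisatisfiable with $\varphi$.
\begin{itemize}
\item If $\varphi$ is unsatisfiable, then $\mathrm{check}$ returns unsat and the loop exits.
\item If $\varphi$ is satisfiable, we need the returned valuation to satisfy $\varphi$.
\end{itemize}
The second case is the main obstacle. Equisatisfiability alone does not guarantee $\vartheta \models \varphi$. I would strengthen the reading of 2(b) so that $\mathrm{refine}^n(\check{\varphi})$ is equivalent to $\varphi$, or equivalently so that every model of the fully refined under-approximation is a model of $\varphi$. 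This is what the CEGAR refinement in \cite{DBLP:conf/ijcai/LagniezBLM17} actually provides: each refinement rules out the spurious $\vartheta$. The alternative is to argue that refine strictly decreases a finite set of spurious candidate valuations.

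With that in hand, each call performs finitely many loop iterations. Each iteration makes at most one recursive call, which terminates by the induction hypothesis supplied by assumption~4. Hence the whole procedure terminates.
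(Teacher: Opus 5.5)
\paragraph{Review of the proposal.} The paper gives no argument for this proposition; it simply defers to Lagniez et al., so your reconstruction from the four listed assumptions is your own route. Its overall shape is sound. The recursive SAT exit follows from assumption~3 and the induction hypothesis, assumption~4 bounds the recursion depth, and completeness follows from soundness plus termination. You are also right that equisatisfiability in 2(b) cannot by itself make the test $\vartheta \models \varphi$ eventually succeed, so some strengthening is unavoidable.

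\paragraph{The gap: the UNSAT exit.} You present the UNSAT exit as following from the assumptions as stated, before and independently of that strengthening. Pushing unsatisfiability forward along refine and then using equisatisfiability at index $n$ only shows that $\varphi$ is unsatisfiable when the current $\psi = \mathrm{refine}^k(\check{\varphi})$ has $k \le n$. Nothing in your bullet excludes $k > n$.

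Under the literal assumptions this step is in fact false. Take $\varphi = p \land q$ and $\check{\varphi} = p$, so 2(b) holds with $n = 0$. Let check return the valuation $\{p, \lnot q\}$, let $RC$ be constantly false, and let refine add the clause $\lnot p$. All four assumptions hold, yet the second call to check reports unsat and the procedure answers UNSAT on a satisfiable $\varphi$. So your strengthening is needed for soundness, not merely for termination.

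\paragraph{How to repair it.} You must say how the strengthened 2(b) yields $k \le n$. Once $\psi = \mathrm{refine}^n(\check{\varphi})$ is equivalent to $\varphi$, check either reports unsat or returns a $\vartheta$ that passes the test $\vartheta \models \varphi$. Hence the loop never performs more than $n$ refinements. Since refine only adds clauses, $\varphi$ entails every earlier $\mathrm{refine}^k(\check{\varphi})$, and that is what makes an UNSAT answer correct.

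A cleaner repair is to carry the invariant that every $\psi$ in the loop is an under-approximation in the sense of the paper's introduction, i.e.\ its unsatisfiability implies that of $\varphi$. This makes the UNSAT exit sound on its own, and leaves your strengthened 2(b) to do only the termination work.
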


The RECAR framework was implemented in the solver \mosaic{}~\cite{DBLP:conf/ijcai/LagniezBLM17}. Briefly, the
over-approximation involves checking if some formula $\varphi$ is K-satisfiable with $\le n$ worlds via a naive SAT translation, with $\bigO({Atom(\varphi) \times n} + n^2)$ variables, where $Atom(\varphi)$ denotes the number of atoms in $\varphi$. The translation has variables for each world, as well as extra variables indicating which worlds are accessible from each other. Over the course of the algorithm, $n$ will increase until it reaches the bound $n=Atom(\varphi)^{\text{depth}(\varphi)}$ where K-satisfiability can be determined as $\varphi$ is K-satisfiable if and only if it has a model with $n$ worlds~\cite{DBLP:conf/mfcs/Nguyen99}. The under-approximation involves ``removing'' conjuncts in $\varphi$. Thus both the under-approximation and over-approximation involve reasoning about the formula $\varphi$ in its entirety via a SAT-solver. This is exactly what CEGAR-tableaux aim to avoid by using a SAT-solver to only determine worlds (and not models), thus requiring only $\bigO(n)$ variables in a SAT-solver. 

\subsection{\recartab{}: Extending CEGAR-tableaux with RECAR}

We now extend CEGAR-tableaux with RECAR
(\scsat)-shortcuts.

First, we must take some liberties when applying RECAR to CEGAR-tableaux. Note our under-approximation
$\check{\varphi} := \vetof{\varphi}$ is a classical logic formula, whilst $\varphi$ is a modal logic formula. As the algorithm
progresses, $\check{\varphi}$ will be refined with classical clauses only, and so whilst there exists an $n$ where
$\mbox{refine}^n(\check{\varphi})$ is K-satisfiable iff $\varphi$ is so,
we are unable to detect when we have reached this $n$. This
differs from \mosaic{}, where the upper bound $n = UB(\varphi)$
can be pre-calculated by recursively counting the number of diamond sub-formulae in $\varphi$.
Second, CEGAR-tableaux already have a recursive
step because they are based on recursive CEGAR loops, unlike RECAR which is based on just one CEGAR loop.

These differences arise because \cegartab{} considers only one world of the Kripke model at a time, as opposed to naive
SAT-translation approaches that translate the whole input modal formula. We believe that keeping worlds separate is
one of the main factors of \cegartab{}'s success, and so we keep the under-approximation as is. Thus, instead of implementing a
faithful RECAR algorithm, we shall instead perform RECAR on each world. Our under-approximation is unchanged, but our
over-approximation is best motivated by an example.

\begin{figure}[H]
\begin{algorithm}[H]
    \caption{\recartab($A$, \code{Trie[$\sigma$]})}
\begin{algorithmic}[1]
  \STATE \COMMENT{Inputs: $A$ is a set of unit assumptions, \\ ~~~~~~\code{Trie[$\sigma$]} is a node in our Trie containing modal clauses and a
    SAT-solver.}
    
\STATE Let $t_\sigma$  := solve(\code{Trie[$\sigma$].sat}, $A$) \COMMENT{\textit{is $\ccpl{\sigma}$ cpl-satisfiable}}
        \IF{$t_\sigma$ = (unsat, $UC_\sigma$)} 
          \STATE \textbf{return} Unsatisfiable($UC_\sigma$) \COMMENT{\textit{because $\ccala{\sigma}$ is \kn{}-unsatisfiable}}
\ELSIF{$t_\sigma$ = (sat, $\vartheta_\sigma$)}
    \STATE Let $L = \{ l \; | \; (c \rightarrow_l \wdia d) \in \code{Trie[$\sigma$].DiaCl} \mbox{ and } c \in \vartheta_\sigma\}$
    \STATE \COMMENT{\textit{Check box and diamond clauses that fire under classical valuation  $\vartheta_\sigma$}}
   \FOR {$i \in L$ }
       \STATE Let $D = \{ d \; | \; (c \rightarrow_l \wdia d) \in
            \mbox{\code{Trie[$\sigma$].DiaCl}} \mbox{ and } c \in \vartheta_\sigma \mbox{ and } l = i\}$
       \STATE Let $B = \{ b \; | \; (a \rightarrow \wbxa b) \in
\mbox{\code{Trie[$\sigma$].BoxCl}} \mbox{ and } a \in \vartheta_\sigma\}$
                                \STATE \COMMENT{\textit{evaluate the next $\alpha$-successor using the jump rule on all fired diamonds}}
                        \IF {\recartab($D \cup B$, \code{Trie[$\sigma$].child($\alpha$)}) = Unsatisfiable($UC_{\sigma;\alpha}$)}
                        \STATE Let $D_{UC} := UC_{\sigma;\alpha} \cap D$
                        \COMMENT{ \textit{How many diamonds in conflict set?}}
                        \IF{$|D_{UC}| = 1$}
                        \STATE \COMMENT{\textit{Normal clause learning}}
                        \STATE Find $c \rightarrow_l \wdia d$ where $d \in D_{UC}$

                        \STATE $CS := \{c\} \cup \{ a \;|\; (a \rightarrow \wbxa b) \in \mbox{\code{Trie[$\sigma$].BoxCl}},\; a \in \vartheta_\sigma,\; b \in UC_{\sigma;\alpha} \}$ 
                        \STATE Let $\varphi := \bigvee_{l \in CS} \neg l$ 
                        \STATE addClause(\code{Trie[$\sigma$].sat}, $\varphi$)
                        \COMMENT{ \textit{Learn new clause $\varphi := \lnot \bigwedge CS$}}

                    \ELSIF{$|D_{UC}| < 1$}
                        \STATE \COMMENT{\textit{No diamond in conflict — pick random $d \in D$ and simulate normal clause learning}}
                        \STATE Pick arbitrary $d \in D$
                        \STATE Find $c \rightarrow_l \wdia d$

                        \STATE $CS := \{c\} \cup \{ a \;|\; (a \rightarrow \wbxa b) \in \mbox{\code{Trie[$\sigma$].BoxCl}},\; a \in \vartheta_\sigma,\; b \in UC_{\sigma;\alpha} \}$ 
                        \STATE Let $\varphi := \bigvee_{l \in CS} \neg l$ 
                        \STATE addClause(\code{Trie[$\sigma$].sat}, $\varphi$)
                    \ELSE
                        \STATE \COMMENT{ \textit{(\scsat) shortcut failed: multiple conflicting diamonds}}
                        \FOR{$(c \rightarrow_l \wdia d) \in \mbox{\code{Trie[$\sigma$].DiaCl}}$ \mbox{ with } $c \in \vartheta_\sigma,\; l = i,\; d \in UC_{\sigma;\alpha}$}
                            \STATE Replace $l$ with a fresh label $l'$ in $c \rightarrow_l \wdia d$
                        \ENDFOR
                    \ENDIF
                \STATE \textbf{return} \recartab($A$, \code{Trie[$\sigma$]}) \COMMENT{ \textit{apply (restart) }}
       \ENDIF
        \ENDFOR
            \STATE \textbf{return} Satisfiable \COMMENT{\textit{because every fired diamond is fulfilled}}
\ENDIF
\end{algorithmic}
\caption{Algorithm of \recartab{}
  for incorporating (\scsat) shortcuts}
\label{recar-tab}
\end{algorithm}
\end{figure}

Suppose we are creating a model with depth $n$, and each world is `firing' two diamond clauses. This results in creating
$\bigO(2^n)$ worlds, and as \cegartab{} must iterate over each of these worlds, this will clearly timeout for large $n$. Instead,
let us create an over-approximation that will aim to reduce the number of successors for each node. Instead of creating two worlds
for $\wdia p \land \wdia q$, one for $p$ and one for $q$, we will attempt to create one $\alpha$-successor with ($p \semic q$). If
this fails we will get a conflict $A'$, and if $\{p, q\} \subseteq A'$, we know we must separate these diamond clauses. Otherwise,
we have just managed to create a model for $w_0$ that has one successor, not two. If every world succeeds in grouping diamond
clauses we can create a model with $\bigO(n)$ worlds instead of $\bigO(2^n)$ worlds, as formalised next.

\subsection{\recartab{}: the RECAR-tableaux over-approximation algorithm}

We begin by annotating each dia-clause with a `label', initialised to 1: i.e.\ we replace $a \limp \wdia b$ with
$a \limp_1 \wdia b$, for all dia-clauses. Now, we create a sole successor for each fired \textit{label}, instead of each fired
\textit{dia-clause}. The number of labels will increase as the algorithm progresses so let the set of labels be $L$.


\begin{theorem}
  Algorithm~\ref{recar-tab} for \recartab{} is a sound and complete decision procedure for \kn{}.
\end{theorem}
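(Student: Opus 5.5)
We will prove soundness, completeness and termination together, by induction on the remaining modal depth $\kappa - |\sigma|$ of the trie-node, with an inner induction on the number of restarts at that node, as in the proof for \cegartab{}. The invariant at every call \recartab($A$, \code{Trie[$\sigma$]}) is this: the clauses in \code{Trie[$\sigma$].sat} are the original $\ccpl{\sigma}$ together with learnt clauses, and each learnt clause is a \kn{}-consequence of $\ccala{\sigma}$ and the deeper layers $\wbx{\alpha}\ccala{\sigma;\alpha}, \ldots$. Relabelling dia-clauses changes only how diamonds are grouped, never their meaning, so the modal content is fixed. We will establish the invariant first. Only lines 15--19 and 21--26 add clauses. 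In the case $|D_{UC}|=1$ the argument of \cegartab{} applies verbatim. The diamonds in $D\setminus D_{UC}$ are not in the core, so $\{d\}\cup\{b \mid b\in UC_{\sigma;\alpha}\cap B\}$ is already unsatisfiable below. Hence $c \land \bigwedge_{a\in CS} a$ forces $\wdia d \land \bigwedge \wbxa b$, which is \kn{}-unsatisfiable. In the case $|D_{UC}|=0$ the core consists only of box-literals, so $\bigwedge \wbxa b$ is unsatisfiable together with any diamond, in particular with the arbitrarily chosen $\wdia d$. The learnt clause is therefore again sound. In the remaining case nothing is learnt, only labels change, and the invariant holds trivially.

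\emph{Soundness of Unsatisfiable.} An Unsatisfiable($UC_\sigma$) answer comes from line 4, with the solver's clauses entailing $\lnot\bigwedge UC_\sigma$. By the invariant this means $UC_\sigma$ is \kn{}-unsatisfiable together with $\mcnf$ restricted to $\sigma$ and its descendants. At the root this gives unsatisfiability of $r \land \widehat{\mcnf(\varphi_0)}$, and hence of $\varphi_0$ by Proposition~\ref{prop:mcnf}.

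\emph{Soundness of Satisfiable.} This is the over-approximation direction. When line 41 is reached, every fired label $i$ has a successor that was certified Satisfiable for $D_i\cup B$, where $D_i$ is the set of all fired diamonds with label $i$. By the induction hypothesis there is a model of $D_i\cup B$ together with the deeper clauses. We take one $R_\alpha$-successor per label (respecting the agent, since $D_i$ groups diamonds of the same agent). This single world fulfils all diamonds in $D_i$ simultaneously and all fired boxes. Together with $\vartheta_\sigma$ it therefore yields a model of $A$ and $\ccala{\sigma}$. Grouping diamonds only merges witnesses, which is harmless, so RECAR assumption 3 holds.

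\emph{Termination, the main obstacle.} The standard \cegartab{} termination measure is the finitely many possible learnt clauses over the finite atom set. The new difficulty is the relabelling branch, which learns nothing and simply restarts. We will argue that each relabelling strictly increases the number of distinct labels among the dia-clauses of the node. It moves at least one of the $\ge 2$ conflicting diamonds with label $i$ to a fresh label $l'$, and we will want to read the loop as splitting off some, but not all, of the diamonds in conflict; if literally all move, $i$ is simply renamed, so we would argue instead that the partition of dia-clauses becomes strictly finer on the conflicting ones, and fix the intended reading accordingly. The number of labels is bounded by $|\cdia{\sigma}|$. Once every diamond has its own label, $|D_{UC}|\le 1$ always holds and the algorithm behaves exactly as \cegartab{}, whose termination is given by the earlier theorem. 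The combined measure is (refinement of the label partition, number of learnt clauses) under the lexicographic order, and it is well-founded.

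\emph{Completeness.} Completeness then follows from termination together with the two soundness directions. Since the procedure terminates with an answer, Satisfiable is returned exactly when $\varphi_0$ is \kn{}-satisfiable.
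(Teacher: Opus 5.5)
Your proposal is correct and follows the paper's proof. Learnt clauses are shown to be \kn{}-consequences, so Unsatisfiable answers are correct. A Satisfiable answer yields a model with one $R_\alpha$-witness per fired label, by induction down the recursion, since grouping diamonds only merges witnesses. Termination combines finitely many label separations with \cegartab{}'s bound on learnt clauses. Where you differ, you are more explicit than the paper: you treat the $|D_{UC}|=0$ case separately, and you note that the relabelling step must actually split the conflicting diamonds (one fresh label each, as the paper's ``using different labels'' intends) for the number of labels to grow.
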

\begin{proof}~
\begin{description}
\item[\rmfamily Termination:] Separating diamond clauses can happen only finitely many times. Thus for RECAR-tableaux
  to not terminate, it must either jump or restart infinitely, but both are impossible by the
  proofs for CEGAR-tableaux~\cite{DBLP:conf/tableaux/GoreK21}.
\item[\rmfamily Soundness:] Any clause learned by \recartab{} is of the form $\lnot c \lor \lnot a_1 \lor \cdots \lor \lnot a_m$
  as in line 17 and 24 and is a logical consequence of the discovery that for some label $l$, the set
  $\{ c, a_1, \cdots , a_m, c \limp_l \wdia d , a_1 \limp_l \wbxa b_1, \cdots , a_m \limp_l \wbxa b_m\}$
  is K-unsatisfiable just as they are in \cegartab{},
  so it is sound to learn such a clause. But when the conflict set contains multiple diamonds, as in line 28, we separate these
  conflicting diamonds using different labels, and restart, which will reduce the cardinality of $L$ at line 6.
  This corresponds to finding that 
  $(\wdia D; \wbxa B)$
  is K-unsatisfiable
  and then attempting to K-satisfy
  $(\wdia D'; \wbxa B)$
  for some $D' \subset D$
  instead.
  The move is sound because we maintain the requirement for (\scsat)-shortcuts that
  if $(\wdia D'; \wbxa B)$ is K-satisfiable then so is
  each individual set
  $(\wdia d' ; \wbxa B)$ for all $d' \in D'$. 
  In the worst case, this
  will lead to every dia-clause having its own label, meaning that \recartab{} will just simulate \cegartab{}, which we know to
  be sound.
  
\item[\rmfamily Completeness] Suppose \recartab{} returns Satisfiable and proceed by induction on the number of jumps on the
  maximal sequence of jumps.  If there are no jumps the Kripke model is just a dead-end $w_\eps$ with
  $\vartheta_\eps$. Otherwise, consider the first jump from this root world on this sequence for any $i \in L$.
  The fired diamonds consist of some
  set $\wdia D = \{\wdia d_1 , \cdots , \wdia d_{n > 0} \}$ while the fired boxes are
  $\wbxa B = \{\wbxa b_1, \cdots , \wbxa b_{m \geq 0}\}$. We jump
  to a child containing the assumptions
  $(D ; B) = \{d_1 , \cdots , d_n, b_1, \cdots , b_m\}$ and the recursive call at line~12 must have
  returned Satisfiable so this child $w_\alpha^{i}$ must be
  K-satisfiable by the induction hypothesis.  Thus
  $(\wdia D ; \wbxa B)$ is K-satisfiable at $w_\eps$
  in the K-model obtained by putting
  $w_\eps R_\alpha w_\alpha^i$. But then so must each set
  $(\wdia d ; \wbxa B)$ for every $d \in D$ since
  $\wdia( p \land q) \limp \wdia p \land \wdia q$ is K-valid.
  Thus each $i \in L$ generates such an $R_\alpha$-child for its corresponding
  diamond jump 
  and the for-loop over these $i \in L$ at line 8 succeeds
  only when all such children return Satisfiable, meaning K-satisfiable.
  Thus every $(\wdia d ; \wbxa B)$ fired by $\vartheta_\eps$ has a K-satisfiable witness, giving a K-model for
  $\mcnf(\varphi_0)$.
\end{description}
\end{proof}


We report on experiments on RECAR-Tableaux at the end of this section, but let us first propose our own approach to achieving
(\scsat) shortcuts.

\subsection{(\texorpdfstring{\scsat}{}) shortcuts via fixpoint detection using \texorpdfstring{\KSP}{}}
One of the main issues with \recartab{} is that it explicitly generates the model, meaning it suffers from the same
shortcomings as regular CEGAR-tableaux. Sure, it uses heuristics to find smaller models, but there is no guarantee these smaller models
exist,
and this ``optimism'' might lead to wasted work.

\begin{figure}[t]
  \centering
  \resizebox{\textwidth}{!}{\begin{tikzpicture}[
  every node/.style = {
    circle,
    draw = {rgb, 255:red, 65; green, 117; blue, 5 },
    fill = {rgb, 255:red, 126; green, 211; blue, 33 },
    minimum size = 2.5em
  },
  skipped/.style={
    dashed,
    draw,
    fill=none
  },
  textnode/.style={
    text width=2cm,
    draw=none,
    fill=none
  },
  mycolor/.style={draw = {rgb, 255:red, 0; green, 0; blue, 0 }},
  level distance = 2.5cm,
  sibling distance = 1cm,
  level 1/.style = {sibling distance=8cm},
  level 2/.style = {sibling distance=4cm},
  level 3/.style = {sibling distance=2cm},
  level 4/.style = {sibling distance=1cm},
]

\node (1) {1}
  child {node (2) {2} edge from parent[draw=none]
    child {node (4) {4} edge from parent[draw=none]
      child {node (8) {8} edge from parent[draw=none]}
      child {node (9) {9} edge from parent[draw=none]}
    }
    child {node (5) {5} edge from parent[draw=none]
      child {node (10) {10} edge from parent[draw=none]}
      child {node (11) {11} edge from parent[draw=none]}
    }
  }
  child {node (3) {3} edge from parent[draw=none]
    child {node (6) {6} edge from parent[draw=none]
      child {node (12) {12} edge from parent[draw=none]}
      child {node (13) {13} edge from parent[draw=none]}
    }
    child {node (7) {7} edge from parent[draw=none]
      child {node (14) {14} edge from parent[draw=none]}
      child {node (15) {15} edge from parent[draw=none]}
    }
  };
\node[right=15cm of 1] (1b) {1}
  child {node (2b) {2}  edge from parent[draw=none]
    child {node (4b) {4} edge from parent[draw=none]
      child {node (8b) {8} edge from parent[draw=none]}
      child {node (9b) {9} edge from parent[draw=none]}
    }
    child {node (5b) {5} edge from parent[draw=none]
      child [skipped] {node (10b) {} }
      child [skipped] {node (11b) {} }
    }
  }
  child {node (3b) {3} edge from parent[draw=none]
    child [skipped] {node (6b) {}
        child [skipped] {node (12b) {}}
        child [skipped] {node (13b) {}}
    }
    child [skipped] {node (7b) {}
        child [skipped] {node (14b) {}}
        child [skipped] {node (15b) {}}
    }
  };
\foreach \from/\to in {1/2,2/4,4/8,8/4,4/9,9/4,4/2,2/5,5/10,10/5,5/11,11/5,5/2,2/1,1/3,3/6,6/12,12/6,6/13,13/6,6/3,3/7,7/14,14/7,7/15,15/7,7/3,3/1} {
    \draw[->, mycolor, thick, bend right=15] (\from) to (\to); 
}

\foreach \from/\to in {1b/2b,2b/4b,4b/8b,8b/4b,4b/9b,9b/4b,4b/2b,2b/5b,5b/2b,2b/1b,1b/3b,3b/1b} {
    \draw[->, mycolor, thick, bend right=15] (\from) to (\to); 
}

\coordinate[right=2cm of 15b] (BottomPoint);
\coordinate[right=2cm of 15b, above=7.5cm of 15b] (TopPoint);
\draw[->, thick, mycolor] (BottomPoint) -- (TopPoint);
\node[textnode] at (25, -2.5) {\huge \ksp{}};
\node[textnode] at (-1, 1) {\huge \cegartab{}};
\node[textnode] at (14.5, 1) {\huge \cegartab{(\ksp{})}};
\end{tikzpicture}}
  \caption{A comparison of the search process of \cegartab{} and \cegartab{(\ksp{})}. \cegartab{} evaluates worlds with a
      pre-order traversal, and in the meantime, \ksp{} saturates layers bottom-up.
      So \cegartab{(\ksp)} generates no successors for ``fixpoint'' worlds $5$ and $3$ giving
      scenarios where \cegartab{(\ksp{})} evaluates a linear number of worlds,
      whilst \cegartab{} evaluates an exponential number.}
  \label{fig:planning-to-sat}
\end{figure}

We propose a different method of finding (\scsat) shortcuts. Instead of explicitly creating a model, we simply check if
one exists by searching for fixpoints in the \cegartab{} under-approximation where, intuitively, a fixpoint is a set
of clauses which
will never be refined with any new clauses by \cegartab{}.

\begin{lemma}[Fixpoints are Satisfiable]\label{lem:fixpoint-correctness}
  If $\ass{\sigma}$ is a set of unit assumptions, and 
  \cegartab{} will not refine 
  $\ccpl{\sigma}$
  with any clauses in the future, then $\ass{\sigma} \semic \ccpl{\sigma}$ is classically satisfiable iff
  $\ass{\sigma} \semic \ccpl{\sigma} \semic \cbox{\sigma} \semic \cdia{\sigma}$ is \kn-satisfiable.
\end{lemma}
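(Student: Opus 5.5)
The right-to-left direction needs no fixpoint assumption. Every cpl-clause of $\ccpl{\sigma}$, original or learnt, is a modal consequence of $\ccal_\sigma$ and the deeper layers; this is the soundness invariant of clause learning in \cegartab{} (and of any resolvent \ksp{} would add). Any \kn-model of $\ass{\sigma} \semic \ccpl{\sigma} \semic \cbox{\sigma} \semic \cdia{\sigma}$ therefore has a root world whose restriction to atoms is a cpl-valuation satisfying $\ass{\sigma} \semic \ccpl{\sigma}$.

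For left-to-right, I read ``\cegartab{} will not refine $\ccpl{\sigma}$ with any clauses in the future'' operationally. Whenever \cegartab($\ass{\sigma}$, \code{Trie[$\sigma$]}) obtains a valuation $\vartheta_\sigma$ at line~2 and recurses on a fired dia-clause $c \limp \wdia d$ with its box set $B$, the call \cegartab($(d;B)$, \code{Trie[$\sigma;\alpha$]}) never returns Unsatisfiable. If it did, line~13 would add a learnt clause to \code{Trie[$\sigma$].sat}, contradicting the hypothesis.

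Given a cpl-valuation $\vartheta_\sigma \supseteq \ass{\sigma}$ satisfying $\ccpl{\sigma}$, we run the algorithm from this node. The SAT-solver returns sat (I will argue it returns some such valuation, which suffices). The for-loop then visits every fired dia-clause, and by the fixpoint hypothesis each recursive call returns Satisfiable. The loop therefore ends at line~17 with Satisfiable.

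We now invoke the correctness theorem for \cegartab{} stated above (or the corresponding per-node statement in Gor\'e and Kikkert's proof): a call returning Satisfiable at \code{Trie[$\sigma;\alpha$]} on assumptions $(d;B)$ yields a \kn-model rooted at a world $v$ satisfying $d$, $B$ and the clauses stored in the subtrie. We take a fresh root $w$ valued by $\vartheta_\sigma$, with one $R_\alpha$-edge to each such witness $v$ for each fired $c \limp \wdia d$. Each fired box-clause $a \limp \wbxa b$ then holds at $w$, because $b \in B$ is true at every $\alpha$-child; this needs the jump to include $B$ for each diamond of the same agent, as in line~8. Non-fired clauses hold trivially since their antecedents are false. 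This gives a \kn-model of $\ass{\sigma} \semic \ccpl{\sigma} \semic \cbox{\sigma} \semic \cdia{\sigma}$ together with the boxed deeper layers, which is what $\ccal_\sigma$ implicitly carries in the trie.

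The main obstacle is making the fixpoint hypothesis precise enough that the recursive children genuinely succeed regardless of which valuation the solver picks. I would define a fixpoint as: for every satisfying valuation of $\ass{\sigma}\semic\ccpl{\sigma}$, every jump from it succeeds. Equivalently, no counterexample exists that would produce a learnt clause. One then proves by induction on remaining modal depth $\kappa - |\sigma|$ that success of the recursive call implies \kn-satisfiability of the child's assumptions with its subtrie. The base case is depth~$\kappa$, where there are no modal clauses and classical satisfiability of $\ccpl{}$ is \kn-satisfiability. Termination of the children is inherited from the \cegartab{} termination theorem.
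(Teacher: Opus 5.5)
Your proposal is correct and follows the paper's proof. The easy direction holds because $\ass{\sigma}\semic\ccpl{\sigma}$ is a subset of the \kn-satisfiable set, so your appeal to the soundness invariant for learnt clauses is true but unnecessary. The hard direction runs \cegartab{} at $\sigma$, notes that any failed jump would trigger clause learning and so contradict the fixpoint hypothesis, and then invokes the correctness of \cegartab{}; your explicit model construction, valuation-independent definition of fixpoint, and induction on remaining depth just spell out the per-node correctness the paper cites in one line.
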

\begin{proof}
  If $\ass{\sigma} \semic \ccpl{\sigma} \semic \cbox{\sigma} \semic \cdia{\sigma}$ is \kn-satisfiable then its subset
  $\ass{\sigma} \semic \ccpl{\sigma}$ is clearly classically satisfiable.  If
  $\ass{\sigma} \semic \ccpl{\sigma}$ is classically satisfiable then \cegartab{$(\mathcal{A}, Trie[\sigma])$} will find
  a classical valuation, then iterate over diamonds. For
  a contradiction, suppose a diamond jump returns Unsatisfiable. Then we learn a clause to add to $\ccpl{\sigma}$, contradicting that
  $\ccpl{\sigma}$ is a fixpoint. Thus all diamonds must be fulfilled, and so \cegartab{$(\mathcal{A}, Trie[\sigma])$} returns
  Satisfiable. By the correctness of \cegartab{}, the set 
  $\ass{\sigma} \semic \ccpl{\sigma} \semic \cbox{\sigma} \semic \cdia{\sigma}$ is \kn-satisfiable.
\end{proof}

Lemma~\ref{lem:fixpoint-correctness} allows
\cegartab{} to skip the potentially expensive process of generating a model if we know there
are no clauses left to learn. This helps when a satisfying model is large, but ``easy'' to produce.
But how to determine
whether $\ccpl{\sigma}$ is a fixpoint? To this end we use \KSP{}, as explained next.

\subsection{CEGAR-tableaux with a resolution-based oracle}

Recall that \ksp{} uses \snfml{} rather than \mcnf{} and that \snfml{} leads to a linear Trie data structure, as explained previously.
Nevertheless, Lemma~\ref{lem:fixpoint-correctness} still applies.

\begin{theorem}[\ksp{} produces \cegartab{} fixpoints]
  Suppose that a
  \textbf{saturated} derivation from $\snfml(\varphi_0)$
  by
  \KSP{} results in a new set $\ccal_{l}'$ of modal clauses (at layer $l$).
  Then $\ccal_{l}'$ is a fixpoint of \cegartab{}.
\end{theorem}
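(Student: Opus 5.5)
The plan is to show that every clause \cegartab{} could ever learn at layer $l$ is already derivable by \KSP{}, so it is subsumed by (or is a consequence already present in) the saturated set $\ccal_l'$. Adding such a clause changes nothing, and in the sense of Lemma~\ref{lem:fixpoint-correctness} the under-approximation at layer $l$ is never refined. Concretely, I would prove a ``simulation'' claim by induction on the modal layer, working upwards from the deepest layer $\kappa$ to $l$.

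\emph{Claim.} Whenever \cegartab{} at a node of layer $n$ learns a clause $\lnot c \lor \lnot a_1 \lor \cdots \lor \lnot a_m$ (line 13), that clause follows from the saturated $\ccal_n'$ by LRES steps, and hence, because the set is saturated, it is subsumed by a clause of $\ccal_n'$.

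\textbf{Step 1: the base case.} At layer $\kappa$ there are no box- or dia-clauses, so nothing is learned. An unsatisfiable core $UC$ returned there means $UC \semic \ccpl{}$ is classically unsatisfiable. By completeness of propositional resolution (LRES), the clause $\bigvee_{u\in UC}\lnot u$ is then derivable from $\ccal_\kappa'$, up to subsumption.

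\textbf{Step 2: the inductive step.} Suppose the recursive call at layer $n+1$ with assumptions $(d;B)$ returns Unsatisfiable$(UC)$. Then $UC \semic \ccpl{n+1}$, together with the clauses learned there, is classically unsatisfiable. By the induction hypothesis the learned clauses are LRES-consequences of $\ccal_{n+1}'$, so resolution gives a layer-$(n+1)$ clause $\bigvee_{u \in UC}\lnot u$. Write the box-literals $b_i \in UC$ as $\lnot l_i$ and the diamond literal as $\lnot t$. We obtain:
\begin{itemize}
\item if $d\in UC$, a clause $l_1\lor\cdots\lor l_m\lor t$ at layer $n+1$, and [GEN1] fires;
\item if $d\notin UC$, a clause $l_1\lor\cdots\lor l_m$, and [GEN3] fires;
\item the degenerate cases with a single box (MRES) or two complementary boxes (GEN2) are handled by the corresponding rules.
\end{itemize}
In each case the conclusion $\lnot l'_1\lor\cdots\lor\lnot l'_m\lor\lnot l'$ at layer $n$ is exactly the clause $\bigvee_{x\in CS}\lnot x$ learned at line 13. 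Saturation of $\ccal_n'$ means this clause, or one subsuming it, is already present.

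\textbf{Step 3: conclusion.} At layer $l$, any clause \cegartab{} would add is therefore already entailed by $\ccal_l'$ through clauses contained in it. So $\ccal_l'$ is never genuinely refined, i.e.\ it is a fixpoint.

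\textbf{Main obstacle: aligning the two normal forms.} \snfml{} merges all Trie nodes $\ccal_\sigma$ with $|\sigma|=l$ into one layer, and \KSP{} carries extra clauses. I must argue two things. First, a clause derived for the merged layer is valid for each node $\ccal_\sigma$, which holds because each node's clauses are a subset of the layer's clauses. Second, the premises used by GEN1/GEN3 at a given node exist in $\ccal_l'$, for which I would use the informal correspondence $\ccal_l\approx\bigcup_{\sigma\in\ag^l}\ccal_\sigma$.

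A subtler point is that ``fixpoint'' must be read up to subsumption and logical equivalence of the cpl-part. \cegartab{} may syntactically add a clause already implied. So I would state explicitly that \emph{refine} means adding a clause not already entailed by $\ccpl{}$, and check that Lemma~\ref{lem:fixpoint-correctness}'s proof only needs this weaker sense: a learnt clause that is already entailed does not change satisfiability.
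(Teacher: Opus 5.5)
Your proposal is correct and uses the same idea as the paper's proof. A clause that \cegartab{} learns at layer $n$ is, up to subsumption, the conclusion of a [GEN1] or [GEN3] application (or MRES/GEN2 in the tautological cases), so saturation of $\ccal_n'$ rules out any genuine refinement. What you add is detail the paper's two-line contradiction leaves implicit: the bottom-up induction plus completeness of LRES, which guarantee that the layer-$(n+1)$ premise is actually present, and the reading of ``fixpoint'' up to subsumption. That reading suffices because every learnt clause is falsified by the current $\vartheta$, so it can never be entailed by clauses $\vartheta$ satisfies.

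One correction: drop the normal-form alignment paragraph. \cegartab{}(\ksp{}) runs directly on the linear $\snfml$ trie, so no alignment is needed, and your justification there is backwards, since a consequence of the merged layer need not follow from a subset of its clauses.
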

\begin{proof}
  If $\ccal_{l}'$ is not a \cegartab{} fixpoint, the restart rule is applicable. Then the learnt clause
  corresponds exactly to one of the [GEN] resolution rule applications from Figure~\ref{fig:ksp-resolution-rules} by \KSP{},
  so the set of clauses is
  not saturated: contradiction.
\end{proof}

  \begin{figure}[t]
\begin{algorithm}[H]
    \caption{\cegartab(\KSP)($A$, \code{Trie[$l$]})}
\begin{algorithmic}[1]
  \STATE \COMMENT{Inputs: $A$ is a set of unit assumptions,
    \\ ~~~~~~\code{Trie[$l$]} contains the set $\ccal_{l}$ of modal clauses  from $\snfml(\varphi_0)$  and a
    SAT-solver.}
    
    \STATE \COMMENT{\textit{Merge Step: Check if external KSP process has saturated layer $l$ wrt layer $l+1$ (bottom-up)}}
    \IF{exists KSP-output-file for layer $l$}
        \STATE Add new KSP clauses to \code{Trie[$l$].sat}
        \STATE $Fixpoint := \textbf{true}$
    \ENDIF
    
    \STATE Let $t_l$  := solve(\code{Trie[$l$].sat}, $A$) \COMMENT{\textit{is $\ccpl{l}$ cpl-satisfiable}}
        \IF{$t_l$ = (unsat, $UC_l$)} 
          \STATE \textbf{return} Unsatisfiable($UC_l$) \COMMENT{\textit{because $\ccala{l}$ is \kn{}-unsatisfiable}}
    \ELSIF{$t_l$ = (sat, $\vartheta_l$)}
        \IF{$Fixpoint$ = \textbf{true}}
            \STATE \textbf{return} Satisfiable \COMMENT{\textit{Fixpoint reached by KSP: skip child generation}}
        \ENDIF

        \STATE \COMMENT{\textit{Check box and diamond clauses that fire under classical valuation  $\vartheta_l$}}
        \FOR {every $(c \rightarrow \wdia d) \in $ \code{Trie[$l$].DiaCl} \mbox{ with } $c \in \vartheta_l$ }
            \STATE Let $B = \{ b \; | \; (a \rightarrow \wbxa b) \in
                \mbox{\code{Trie[$l$].BoxCl}} \mbox{ and } a \in \vartheta_l\}$
                                \STATE \COMMENT{\textit{evaluate the next layer using the jump rule on $\alpha$}}
                    \IF {\cegartab($(d;B)$, \code{Trie[$l+1$])}) = Unsatisfiable($UC_{l+1}$)}
                \STATE $CS := \{c\} \cup \{ a \; | \; (a \rightarrow \wbxa b) \in \mbox{\code{Trie[$l$].BoxCl}}
                                 \mbox{ and } a \in \vartheta_l \mbox{ and } b \in UC_{l+1}\}$ 
                    \STATE{ Let $\varphi := \bigvee_{l \in CS} \neg l$} 
                \STATE addClause(\code{Trie[$l$].sat}, $\varphi$)
                    \COMMENT{ \textit{Learn new clause $\varphi := \lnot \bigwedge CS$}}
            \STATE \textbf{return} \cegartab($A$, \code{Trie[$l$]}) \COMMENT{ \textit{apply (restart) }}
            \ENDIF
        \ENDFOR
            \STATE \textbf{return} Satisfiable \COMMENT{\textit{because every fired diamond is fulfilled}}
    \ENDIF
\end{algorithmic}
\end{algorithm}
\caption{Algorithm of \cegartab(\KSP) on $\snfml$ showing the integration of fixpoint detection.}
\label{fig:kn-ksp}
\end{figure}

Our algorithm for ``fixpoint detecting'' CEGAR-Tableaux is 
\cegartab{(\ksp{})} but our implementation is \cegarboxpp{(\ksp{})}. For simplicity, \cegarboxpp{(\ksp{})} is
\textbf{multithreaded}, running \cegarboxpp{} and \ksp{} independently, with naive communication via file
writing/reading. We thank Cl{\'{a}}udia Nalon for adjusting \ksp{} to make this possible. The approach
is as follows where both \cegarboxpp{} and \ksp{} use the linear Trie for \snfml{}, not the tree-like Trie for \mcnf{},
see Figure~\ref{fig:planning-to-sat} and Figure~\ref{fig:kn-ksp}:
\begin{enumerate}
    \item Run \ksp{} on an input formula $\varphi_0$, and have it print out $\snfml{(\varphi_0)}$;
    \item \ksp{} then performs (layered) resolution, from deepest ($\kappa$) to shallowest layers ($0$), and after fully
      saturating a layer, prints out all classical clauses it has learned;
    \item \cegartab{} runs independently on $\snfml{(\varphi_0)}$, from shallowest ($0)$ to deepest ($\kappa$) layers, but when it
      receives clauses from \ksp{} at layer $l$, it adds them to $\ccpl{l}$ (at modal layer $l$), marks the layer as a fixpoint
      and ignores the jump rule at that layer (sound by Lemma~\ref{lem:fixpoint-correctness}).
\end{enumerate}

There are some subtle points in this integration which deserve noting. At any particular time, \ksp{} is
  saturating some layer $l$ with respect to layer $l+1$ using the resolution rules and adding cpl-clauses to layer $l$ as
  required to avoid the jumps that would cause a clash at layer $l+1$.
  If \cegarboxpp{} arrives at layer $l$ from layer $i-1$ \textit{before}
  before \ksp{} has saturated layer $l$ wrt layer $l+1$, then 
  the flag  \textit{Fixpoint} will be false and line 12 will be skipped,
  so \cegarboxpp{(\ksp)} mimics \cegarboxpp{} at layer $l$. Else, if \ksp{} has already saturated layer
  $l$ wrt layer $l+1$, and printed out new cpl-clauses, then 
  \textit{Fixpoint} is put to true at line 5.
  \cegarboxpp{} therefore
  imports the new cpl-clauses at line~4 and calls the SAT-solver at layer $l$ at line 7 with the assumptions $A$.
  That is, \cegarboxpp{}
  checks whether the assumptions $A$ demanded by the jump from
  layer $l-1$ to layer $l$ are jointly cpl-satisfiable wrt the old cpl-clauses of layer $l$ (inside the SAT-solver at layer $l$)
  plus these new cpl-clauses,
  But surely, \ksp{} has already told us that layer $l$ is satisfiable, so why do we need this check?
  Not quite:
  it is perfectly possible that the new cpl-clauses make
  layer $l$ cpl-unsatisfiable under assumptions $A$ as demanded by layer $l-1$,
  since \ksp{} has not checked this possibility (yet).
  Thus the call to the SAT-solver at line 7 is essential in both cases.

At worst, this algorithm \textit{should} be as strong as \cegartab{}.
As we shall see, the combination works surprisingly well, despite the rather naive integration.

\subsection{Experimental evaluation of the two new approaches}
\begin{figure}[t]
  \centering{}
  \includegraphics[width=0.8\textwidth]{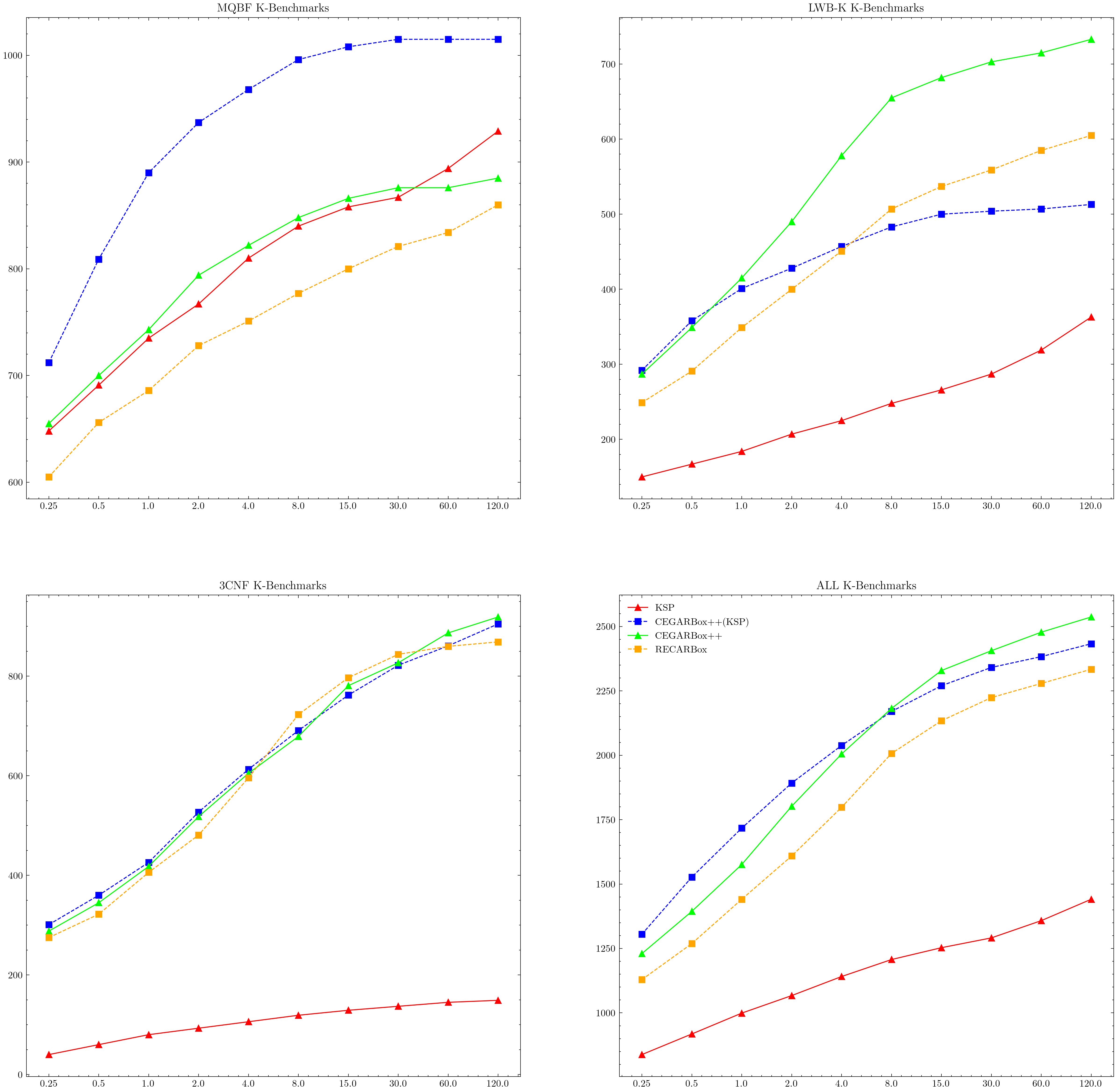}
  \caption{Performance of solvers on the standard K benchmarks}
    \label{fig:all-after}
\end{figure}

\begin{figure}[t]
  \centering{}
  \includegraphics[scale=0.5]{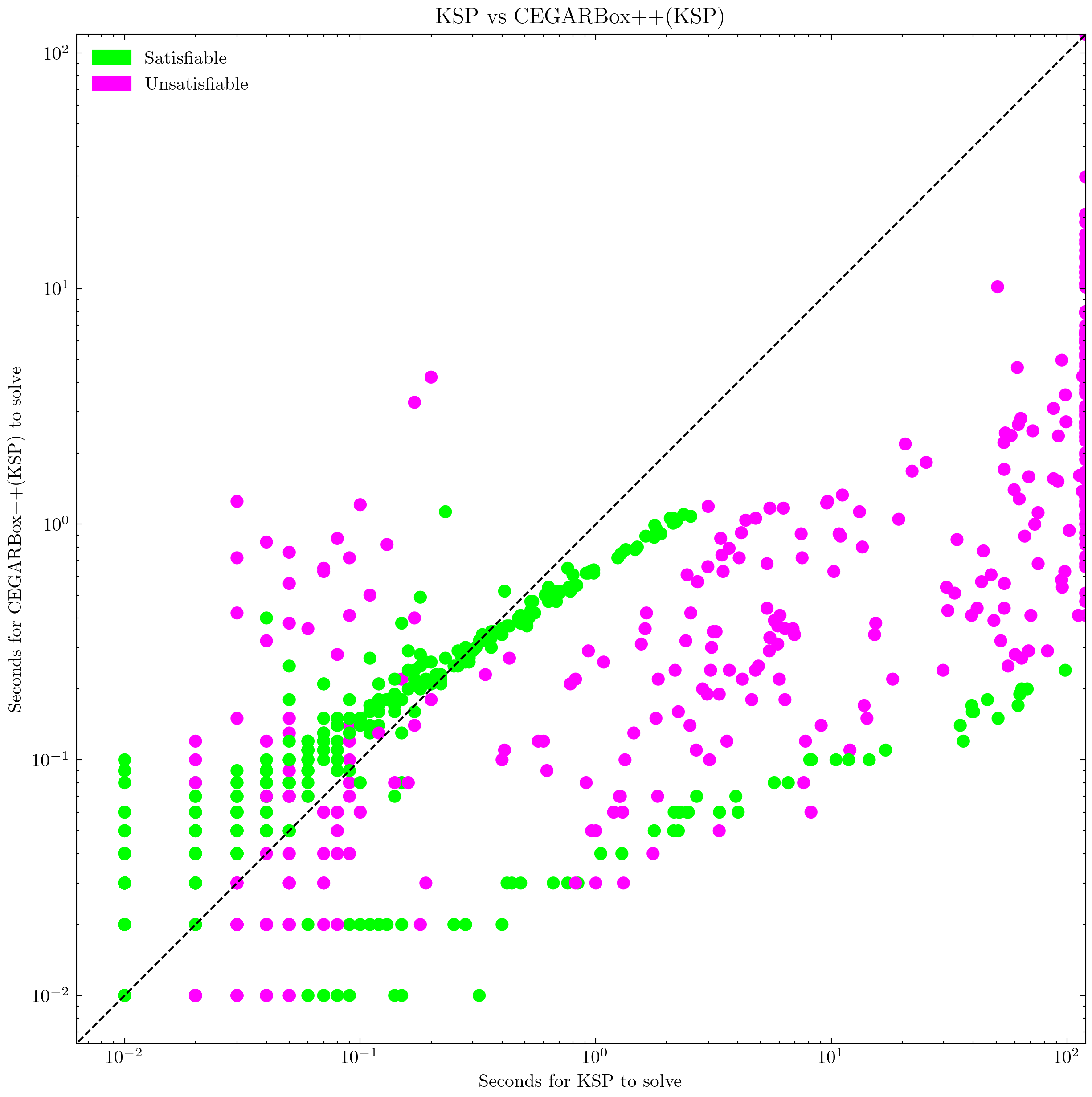}
  \caption{Performance of \cegarboxpp{(\KSP{})} and \KSP{} on the MQBF benchmarks. Problems on the
    top edge and the right edge indicate a timeout for \cegarboxpp{(\ksp)} and \KSP{} respectively.
    Problems in the bottom right triangle/top left triangle 
    are solved by \cegarboxpp{(\KSP{})} / \KSP{} quickest.}
    \label{fig:mqbf-after}
\end{figure}

Our experiments used an Intel i7-11700F@2.50 GHz CPU with 16GB of RAM. We used the same K-benchmarks as in
our original paper~\cite{DBLP:conf/tableaux/GoreK21}. Our code can be found here:
\url{https://github.com/cormackikkert/CEGARBoxCPP}.
We used \minisat{}~\cite{een2006minisat} as our SAT-solver, preliminary results showed that using other SAT-solvers
leads to negligible performance differences.

Using \cegarboxpp{} as our control, we tested our two variants that implement (\scsat) shortcuts: \cegarboxpp{(\ksp{})} and
\recarbox{}. Finally, as \cegarboxpp{(\ksp{})} runs an instance of \ksp{} ``under the hood'', we compare against \KSP{}
too. We use \ksp{(0.1.6)}, whereas our previous work~\cite{DBLP:conf/tableaux/GoreK21} used version 0.1.3 (the latest
at that time).

Since \cegarboxpp{(\ksp{})} uses multi-threading, CPU-time is not an appropriate measure of performance, so all reported times are
  ``wall time'', not CPU-time.

Figure~\ref{fig:all-after} shows that 
in all cases \recarbox{} is worse than \cegarboxpp{}, likely meaning that the extra work required to look for smaller models is wasted.
On the other hand we see that \cegarboxpp{(\ksp{})} outperforms \cegarboxpp{} by a large margin on the MQBF benchmarks and is equal on the 3CNF benchmarks. However, \cegarboxpp{} outperforms \cegarboxpp{(\ksp{}}) on the LWB-K benchmarks. This results in both provers being about equal overall. \recarbox{} performs better than \cegarboxpp{(\ksp)} on the LWB-K benchmarks, which we believe is due to the presence of massive formulae in these benchmarks: see our conclusion.

\paragraph{Synergy vs. Parallelism}

Finally, as the integration involves multithreading, one might wonder whether the performance gains are due to the parallelism itself (effectively ``cheating'' by doubling resources). Specifically for the MQBF benchmarks, is \cegarboxpp{} \textit{helping} to solve the satisfiable problems, or is it simply just \KSP{} solving them all (since it is best on satisfiable formulae)?

To this end, we compare \cegarboxpp{(\KSP{})} with \KSP{} in Figure~\ref{fig:mqbf-after}. We can see that on hard satisfiable benchmarks, \cegarboxpp{(\KSP{})} outperforms \KSP{}, meaning that \cegarboxpp{} is contributing. We can see no satisfiable timeouts for \cegarbox{(\KSP{})}, and even see a 100x improvement in some satisfiable problems, refuting the possibility that this improvement comes just from the doubling of resources due to parallelisation.

This highlights a distinction between a simple portfolio approach and our integration. While \KSP{} and \cegarboxpp{} generally
excel at different problem types (satisfiable vs. unsatisfiable), the integration appears to be better than the sum of their
parts. Communicating fixpoints has allowed us to achieve results that a non-communicating portfolio could not.

\section{Conclusions}

We can see that RECAR-tableaux doesn't perform that well. We believe this is because it operates on a much stricter form of
(\scsat) shortcuts, where it tries to create smaller models, as opposed to \cegartab{(\ksp{})}, which doesn't create
models, but just shows they exist. Creating smallest models is
intractable~\cite{DBLP:journals/ngc/ChenLP86,DBLP:journals/jar/Wolfram89} in general,
and
if they don't exist,
RECAR-tableaux simply wastes its
time.

Our approach, \cegartab{(\ksp{})}, for dealing with (\scsat) shortcuts shows a lot of promise, outperforming all solvers
on the MQBF benchmarks. We believe a better implementation would result in \cegartab{(\ksp{})} outperforming \cegartab{} across
the board. In particular, the issue with the current approach is that the \ksp{} normal forming procedure produces approximately
two times as many clauses as \cegartab{}, resulting in slowdowns on \textbf{massive} formulae, which are present in the LWB-K
benchmarks. These are why \recarbox{} performs better than \cegarboxpp{(\ksp)} on the LWB-K benchmarks.
\ksp{} can simplify this normal form (resulting in a smaller number of clauses than \cegartab{}), but on these massive
formulae the simplification times out. A more polished integration would use \cegartab{}'s normal forming procedure, and have
better and more real-time communication protocols. We imagine this would be capable of outperforming \cegartab{} on the 3CNF and
LWB-K benchmarks, whilst also full-solving the MQBF benchmarks. An implementation of such is easier said than done as \ksp{} has
certain requirements on its normal form and must produce extra clauses that are not needed for \cegartab{}. An alternative would
be to instead optimise the simplification process of \ksp{}, so it can produce reasonable normal forms quickly.

In general, this idea of searching for fixpoints is simply a method of detecting if $\check{\varphi}$ and $\varphi$ are equi-satisfiable, a normal idea in classical CEGAR, but more difficult in complicated CEGAR algorithms, where the under-approximation is a different type to the input (in our case a classical formula vs a modal formula). This fixpoint approach will be generally useful for other recursive CEGAR algorithms, such as CAQE~\cite{DBLP:conf/fmcad/RabeT15}, or PDR~\cite{DBLP:conf/vmcai/Bradley11}, though of course it is most effective in applications such as \cegartab{} where branching is involved, meaning any (\scsat) shortcut can possibly result in an exponential improvement in speed.

Whilst any approach could be used for fix-point detection, it is interesting to consider the implications of using resolution
here. Previously, SAT-based, tableaux-based and resolution-based solvers were the state-of-the-art solvers used for reasoning in
modal K. Now, \cegartab{(\ksp{})}, is an elegant combination of all these approaches, that is greater than the sum of its
parts. In particular, this highlights a key similarity of \cegartab{} and resolution methods. In \cegartab{} we learn clauses when
we need to (i.e.\ \textit{lazily}), whereas \ksp{} learns clauses iteratively (i.e.\ \textit{eagerly}).

We used
\ksp{} as
an oracle for \cegartab{}. From discussion with Cl{\'{a}}udia Nalon, Clare Dixon, Ullrich Hustadt, and Fabio Papacchini,
creating an efficient implementation by reversing the roles is not obvious.

\bibliographystyle{eptcs}
\bibliography{sn-bibliography}

\end{document}